\begin{document}
\title{Imitater: An Efficient Shared Mempool Protocol \\ with Application to Byzantine Fault Tolerance}
%
\titlerunning{Imitater: An Efficient Shared Mempool Protocol}
%
\author{Qingming Zeng\inst{1} \and
Mo Li\inst{2} \and
Ximing Fu\thanks{Corresponding author}\inst{1,5,6}
\and Hui Jiang\inst{3,4}
\and Chuanyi Liu\inst{1,5,6}
}

\authorrunning{Q. Zeng et al.}


\institute{Harbin Institute of Technology, Shenzhen, China \\ \email{22S151098@stu.hit.edu.cn, \{fuximing,liuchuanyi\}@hit.edu.cn} \and
The Chinese University of Hongkong, Shenzhen, China
\email{220019160@link.cuhk.edu.cn}\\
\and Tsinghua University, Beijing, China
\and Baidu Inc, Beijing, China \\
\email{jianghui01@baidu.com}
\and Peng Cheng Laboratory, Shenzhen, China \\
\and Key Laboratory of Cyberspace and Data Security, Ministry of Emergency Management
}

\maketitle              
\begin{abstract}

Byzantine Fault Tolerant (BFT) consensus, a cornerstone of blockchain technology, has seen significant advancements. While existing BFT protocols ensure security guarantees, they often suffer from efficiency challenges, particularly under conditions of network instability or malicious exploitation of system mechanisms.

We propose a novel Shared Mempool (SMP) protocol, named Imitater, which can be seamlessly integrated into BFT protocols. 
By chaining microblocks and applying coding techniques,
Imitater efficiently achieves \emph{totality} and \emph{availability}. Furthermore, a BFT protocol augmented with Imitater ensures \emph{order preservation} of client transactions while mitigating the risks of \emph{over-distribution} and \emph{unbalanced workload}.

In the experiment, we integrate Imitater into the HotStuff protocol, resulting in Imitater-HS. The performance of Imitater-HS is validated in a system with up to 256 nodes. Experimental results demonstrate the efficiency of our approach: Imitater-HS achieves higher throughput and lower latency in the presence of faulty nodes compared to Stratus-HS, the state-of-the-art protocol. Notably, the throughput improvement increases with the number of faulty nodes.

\end{abstract}
\section{Introduction}


With the rapid advancement of blockchain technology, Byzantine Fault Tolerant (BFT) consensus has become a cornerstone of modern blockchain systems, driving extensive research and applications~\cite{aptos,diem,ethereum,androulaki2018hyperledger}.
At its core, BFT consensus protocols are designed to implement State Machine Replication (SMR) in the presence of Byzantine faults, ensuring consistent transaction ordering across distributed nodes by maintaining identical state machines and logs.
They provide strong guarantees of both \textit{safety}—all honest nodes agree on a single transaction order—and \textit{liveness}—all valid client requests are eventually processed despite Byzantine faults.



Existing methods focus on optimizing the message complexity of protocols to enhance performance \cite{castro1999practical,kotla2007zyzzyva,jalalzai2023fast}, culminating in HotStuff \cite{yin2019hotstuff}, which achieves linear complexity. 
Most of these protocols are leader-based, where a designated principal node packages transactions to propose, and other replicas vote on the leader's proposal to reach consensus.
Under the framework of the leader-replica structure, the leader becomes a bottleneck, thereby affecting scalability. 
To address this, some approaches adopt scale-out via sharding~\cite{kokoris2018omniledger,wang2019monoxide,zamani2018rapidchain}, dividing replicas into groups.
However, these methods rely on a different fault model that limits fault tolerance within each group, rather than across the entire system.

\subsubsection{Shared Mempool Based Protocols.}
The scalability bottleneck in leader-based BFT protocols primarily stems from the leader's responsibility for transaction distribution. 
Recent studies~\cite{hu2022leopard,danezis2022narwhal,gai2023scaling,hu2023data} have explored another approach to improving the scalability of BFT consensus by decoupling the block distribution and consensus phases, as described in \cite{biely2012s,zhao2018sdpaxos,bagaria2019prism}, and have demonstrated significant performance improvements in recent implementations~\cite{danezis2022narwhal}.
Each node independently packages its local transactions into microblocks and distributes them to other nodes. The leader then aggregates the identifiers (e.g., hash values) of these microblocks into a candidate block for consensus at the start of each consensus round. This decoupling technique improves the utilization of both computational and bandwidth resources. This decoupling approach also abstracts the microblock distribution phase as a Shared Mempool (SMP)~\cite{gai2023scaling,hu2023data}.

Under the decoupled framework, a key challenge is ensuring the microblock availability property. When a replica receives a proposal, the microblocks it references may not be locally available, 
requiring it to wait until all are available before proceeding (e.g., voting).
This may involve passive waiting or active requests to other nodes—both of which can significantly delay consensus.

To address this, existing Shared Mempool implementations~\cite{danezis2022narwhal,gai2023scaling} ensure microblock availability by having replicas to generate an availability certificate when distributing microblocks. The leader attaches these certificates to the proposal, enabling replicas to proceed with subsequent steps as soon as they receive the proposal. The certificates assure replicas that all referenced microblocks will eventually be received, eliminating the need for local availability at the time of proposal processing.




\subsubsection{Efficient Totality.}
An essential property of Byzantine reliable broadcast~\cite{cachin2011introduction} is \textbf{totality}, which ensures that if an honest node delivers a message $v$, then all honest nodes will eventually deliver $v$. This property is equally critical in BFT state machine replication (BFT-SMR). Specifically:

\begin{itemize}
    \item[$\cdot$] \textbf{Sequential Execution.} In State Machine Replication, the execution of transactions is contingent upon the order established by consensus. However, Byzantine faults or the asynchronous nature of the network may cause some nodes to miss certain blocks. When blocks are missing, subsequent blocks, despite achieving consensus, cannot be executed due to the sequential execution requirement.
    
    \item[$\cdot$] \textbf{Garbage Collection.} From the implementation perspective, nodes aim to release memory resources associated with a given request as soon as a decision is reached. However, because other nodes may later request this data, nodes cannot predict when it is safe to release the associated memory, forcing them to maintain these resources indefinitely.    
\end{itemize}

It is worth noting that achieving totality is not inherently guaranteed by microblock availability. Most current systems do not address totality explicitly within BFT protocols. Instead, they rely on alternative mechanisms, such as state transfer (i.e., reliably broadcasting the
decision value of each block)\cite{castro1999practical,kotla2007zyzzyva} or a pull mechanism (i.e., requesting missing data from other nodes)\cite{danezis2022narwhal,hu2023data}. While resource consumption prior to reaching consensus is often a primary focus, post-consensus resource usage also deserves significant attention. Both state transfer and pull mechanisms operate after the consensus decision point, potentially contributing to further resource overhead.


However, these approaches are often communication-inefficient and may even undermine the original performance of the protocol. State transfer requires quadratic communication overhead, which remains unavoidable even in synchronous environments~\cite{dolev1983authenticated}. For the pull mechanism, considering the presence of faulty nodes, they might request every block from all nodes, regardless of whether the block is genuinely missing, thus imposing unnecessary bandwidth consumption on the system. In the worst-case scenario, if all faulty nodes request blocks from all honest nodes, this could result in quadratic bandwidth consumption, severely degrading system performance. Even one faulty node could halve the throughput, which will be confirmed in the experiments in Section~\ref{Sec:evaluation}



\subsubsection{Contribution.} In this paper, we pay efforts for solving the above concern. To encounter the above concerns, a series of techniques are exploited, making the following contributions.

We propose a novel Shared Mempool (SMP) protocol (Section~\ref{sec:imitater}) named Imitater that efficiently guarantees the \emph{totality} of microblock distribution through the use of erasure codes, ensuring that all microblocks are eventually ready for execution without relying on communication-intensive request-based mechanisms.

We demonstrate how to compile Imitater into an efficient BFT protocol, Imitater-BFT(Section~\ref{sec:building-bft}), which offers the following advantages: (i) the \emph{totality} and \emph{microblocks availability} can be naturally inherited from the SMP totality; (ii) by appropriately scheduling the Dispersal and Retrieval components within SMP, we achieve \emph{bandwidth adaptability} (Section~\ref{sec:trigger-dispersal});  (iii) the protocol ensures \emph{order keeping} of client transactions in the final decision, effectively handles \emph{over distribution} risks, and accommodates \emph{unbalanced workload} scenarios without compromising protocol performance (Section~\ref{sec:practical-problems}).

We integrate Imitater into the HotStuff protocol, creating Imitater-HS, and validate it through large-scale experiments(Section~\ref{Sec:evaluation}). 
In a 100-node system with up to 1/3 faulty nodes, Imitater-HS achieves 21.5Kops/s throughput—approximately 9× higher than Stratus-HS's 2.3Kops/s—demonstrating its superior performance and practical scalability.

\section{Background}
\label{Section:background}
\subsection{System Model}

In our setup, the system consists of \(n = 3f + 1\) \textit{nodes} or \textit{replicas}, with up to \(f\) Byzantine nodes controlled by an adversary \(\mathcal{A}\). Honest nodes strictly follow the protocol, while Byzantine nodes can behave arbitrarily. We assume reliable communication and a public key infrastructure (PKI). The adversary \(\mathcal{A}\) has limited computational power and cannot break cryptographic primitives.

We consider a partially synchronous network as outlined in \cite{dwork1988consensus}. After an unknown Global Stabilization Time (GST), messages transmitted between honest nodes arrive within a known maximum bound \(\Delta\). However, the transmission of network messages can be manipulated by the adversary \(\mathcal{A}\), who may delay or reorder messages. Our consensus protocol follows the \emph{leader-replica} framework, with the leader proposing blocks and the replicas deciding to accept or deny.

External clients continuously submit transaction requests to the system, choosing replicas randomly, by proximity, or preferentially. Given that Byzantine nodes may censor transactions, clients use a timeout mechanism to resubmit transactions to other replicas until they find an honest node.

\subsection{Primitives}

We use a \((2f+1,n)\) threshold signature scheme~\cite{boneh2001short,shoup2000practical},
which includes algorithms \((TSign, TComb, TVerf)\). The \(TSign\) algorithm allows node \(i\) to sign message \(m\) with its private key \(tsk_i\), producing a partial signature \(\sigma_i\). The \(TComb\) algorithm combines \(2f+1\) partial signatures for the same message \(m\) into an aggregated signature \(\sigma\). The \(TVerf\) algorithm then verifies \(\sigma\) using the public key \(pk\), message \(m\), and the combined signature \(\sigma\).

We employ an \((f+1,n)\)-erasure code, Reed-Solomon code \cite{reed1960polynomial} by default, which includes a pair of encoding and decoding algorithms $( Enc, Dec )$. The encoding algorithm $Enc$ encodes a message into \(n\) chunks $s_1,s_2,...,s_n$, while the decoding algorithm decodes the message using any \(f+1\) out of $n$ coded chunks.

We utilize Merkle trees~\cite{kocher1998certificate} to demonstrate the set relationships of coded chunks, including a set of Merkle tree generation functions and Merkle proof verification functions denoted as \((M.Gen, M.Vrf)\). We use \(R(\cdot)\) to represent the root of the Merkle tree and \(\boldsymbol{P}(\cdot)\) to denote the Merkle proofs. \(M.Gen\) generates the Merkle root and the Merkle proofs for a set of elements $\boldsymbol{s}$, i.e., \((\boldsymbol{P}(\boldsymbol{s}), R(\boldsymbol{s})) \gets M.Gen(\boldsymbol{s})\), while \(M.Vrf\) verifies the validity of a proof $P_i(\boldsymbol{s})$ for an element $s_i$, i.e., \(\{True, False\} \gets M.Vrf(P_i(\boldsymbol{s}), s_i, R(\boldsymbol{s}))\).

\section{Imitater Protocol}
\label{sec:imitater}
\subsection{Overview}


In this paper, we adopt a Shared Mempool (SMP)-based architecture~\cite{hu2022leopard,danezis2022narwhal,gai2023scaling,hu2023data}, where each node independently packages its local transactions into microblocks and distributes them to other nodes. The leader then packages the identifiers (e.g., hash values) of these microblocks into a candidate block for consensus. The distributed microblocks is referred to as the Shared Mempool (SMP)~\cite{gai2023scaling,hu2023data}, which conceptually represents a shared memory pool. We use the term \textit{mempool} to denote each node’s actual local storage.

\begin{figure}
    \centering
    \includegraphics[width=0.5\textwidth]{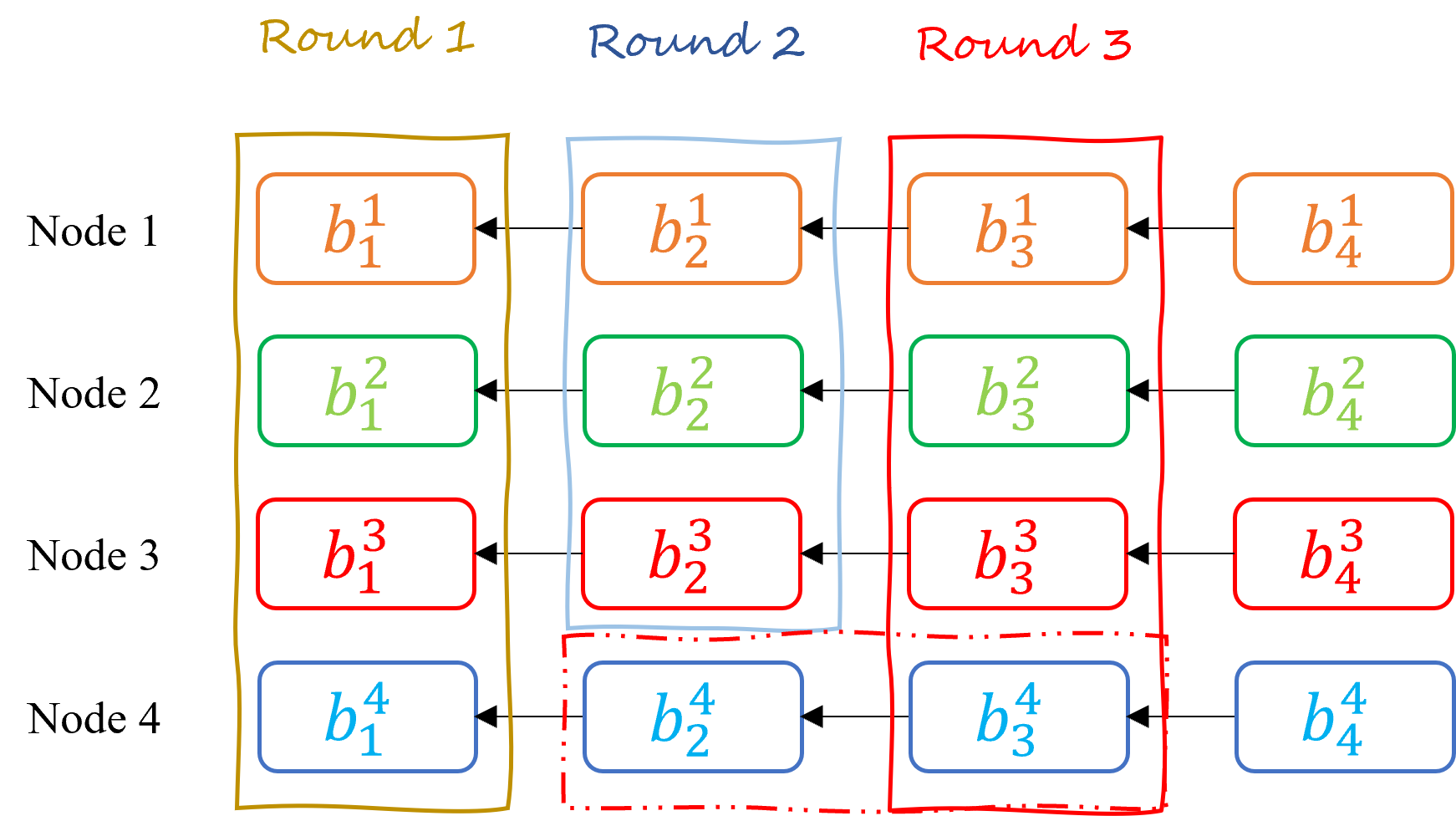}
    \caption{Chain-based organization of a node's microblock mempool. Each node maintains $n$ local microblock chains, where the $i$-th chain sequentially records microblocks disseminated by node $i$.}

    \label{fig:Imitater}
\end{figure}
\subsubsection{Chained Microblocks.}
Imitater is designed in the form of a chained structure, as shown in Figure~\ref{fig:Imitater}. 
Let \(b_p^{i,j}\) denote a microblock at position \(p\) on the \(i\)th chain in the mempool of node \(j\) and $R(b_p^{i,j})$ be the identifier of \(b_p^{i,j}\). In this paper, we use the Merkle root $R(\cdot)$ as the identifier of an microblock, which we will elaborate in Section~\ref{sec:dis-phase}. Since all honest nodes eventually store the same chains, as shown in Lemma~\ref{lemma:Microblock consistency}, \(j\) is omitted, and the microblock is denoted as \(b_p^i\) for simplicity.

When node $i$ distributes a microblock, each node receiving the distribution message replies an ack message together with a partial signature of the corresponding identifier. 
The partial signatures are aggregated into a threshold signature \( \sigma(b_p^i) \), forming the \emph{Availability Certificate (AC)} \(C_p^i = (\sigma(b_p^i), R(b_p^i))\). 
Once node $i$ has collected the AC for \(b_p^i\), it can construct the next microblock \(b_{p+1}^i\) and distribute it, hence \(b_{p+1}^i\) implies the availability of \(b_{p}^i\). Since a microblock includes the AC of its predecessor, this naturally forms a chain.



Now we formally define the structure of a microblock as \( b_p^i = (R(b_p^i), C_{p-1}^i, t_p^i) \), consisting of the identifier $R(b_p^i)$, the availability certificate of the previous block $C_{p-1}^i$, and the content $t_p^i$ of the current block.


A chain-based SMP efficiently maintains the order of microblocks and reduces proposal size during consensus. At the start of each round, the leader packages the highest-position identifiers from each chain into the proposal. If some identifiers are not included, later proposals include the latest identifier, implicitly incorporating all preceding uncommitted microblocks of the chain, ensuring efficient consensus and sequential integrity.


For example, in Figure~\ref{fig:Imitater}, \(b_2^4\) and \(b_3^4\) were not included in earlier proposals. When a successor microblock such as \(b_4^4\) is included, \(b_2^4\) and \(b_3^4\) are implicitly added in the correct order, preserving their sequence.

\subsubsection{Coding and Totality.}
To avoid the request-based methods with high communication overhead, Imitater addresses totality by exploiting coding techniques. More concretely, our approach to microblock distribution involves distributing encoded chunks, nodes forwarding and collecting these chunks, instead of directly broadcasting the complete microblocks. 

Distributing blocks via coding techniques has been extensively explored in previous works~\cite{miller2016honey,kaklamanis2022poster,yang2022dispersedledger}. 
Imitater repurposes coding technique to solve the totality problem, ensuring complete microblock delivery to all nodes.
In addition, this method of coding produces \emph{bandwidth adaptability}, detailed in Section~\ref{sec:trigger-dispersal}.

\subsection{SMP Protocol}
\label{sec:smp detail}
Our SMP protocol, based on the communication model in Figure~\ref{fig:Imitater-Comm}, consists of two phases: \textbf{Dispersal} and \textbf{Retrieval}, defined in Algorithms~\ref{Alg:Microblock Dis} and \ref{Alg:Microblock Ret}. The Dispersal phase handles chunk distribution and Availability Certificate (AC) generation, while the Retrieval phase ensures all honest nodes acquire complete microblock content.

\label{sec:SMP_detail}

\begin{figure}[t]
    \centering
    \includegraphics[width=0.6\textwidth, height=3.5cm]{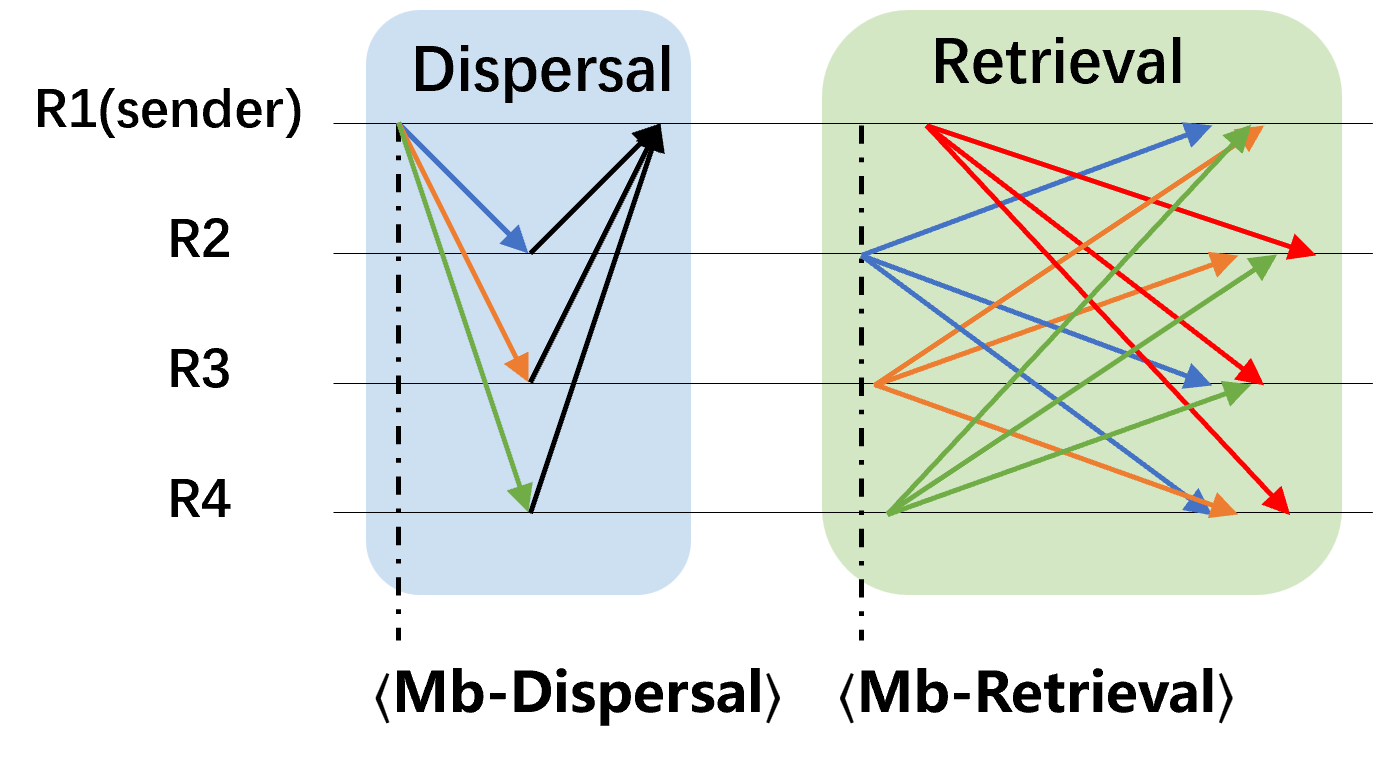}
    \caption{The communication pattern of disseminating a microblock in SMP}
    \label{fig:Imitater-Comm}
\end{figure}


\subsubsection{Dispersal Phase.}
\label{sec:dis-phase}
The Dispersal phase is initiated by an Mb-Dispersal event. When a node $i$ distributes a microblock \(b_p^i\), it must utilize the AC for its predecessor \(b_{p-1}^i\). Upon triggering an Mb-Dispersal event, node $i$ employs $Enc$ to encode the microblock \(b_p^i\) into \(n\) chunks $s^{i,1}_p,\ldots,s^{i,n}_p$. It then uses $M.Gen$ to compute the corresponding Merkle root \(R(b_p^i)\) and Merkle proof \(\boldsymbol{P}(s_p^{i,j})\) for each $j\in [n]$. The Merkle root enables direct verification of the correspondence between any given data chunk and its parent microblock, thus serving as its identifier. Node \(i\) then sends the message \(\langle\text{Mb-Dis},R(b_p^i),C_{p-1}^i,s^{i,j}_p,P(s_p^{i,j})\rangle_{i}\) to all nodes $j$. 

Upon receiving an Mb-Dis message, node $j$ verifies the AC \( C_{p-1}^i \) and the Merkle proof using $TVrf$ and $M.Vrf$. It also ensures that this is the first Mb-Dis message for position $p$ from node $i$; otherwise, the message is ignored. 
After passing all verification, node \(j\) stores the data in its local SMP and signs \(R(b_p^i)\) using \(TSign\), sending the resulting signature \(\sigma_j(b_p^i)\) in an Mb-Ack message to node \(i\).
Once node $i$ collects \(2f+1\) Mb-Ack messages for \(R(b_p^i)\) from different nodes, it aggregates the signatures into a proof \(\sigma(b_p^i)\) using \(TComb\). Thus, \((\sigma(b_p^i),R(b_p^i))\) forms the AC for \(b_p^i\). Therefore, node $i$ can utilize this certificate to initiate the Dispersal of its next microblock \(b_{p+1}^i\).


\begin{algorithm}[t]
    \caption{Microblock Dispersal with $b_p^i$ at node $i$}
    \label{Alg:Microblock Dis}
    \begin{algorithmic}[1]
        \item Local variables:
        \item $\boldsymbol{Sigs}\gets \{\}$  \textcolor{blue}{\Comment{Signatures over $R(b_p^i)$}}
        
        \vspace{0.5em}
        \item \textbf{Upon} event $\langle\text{Mb-Dispersal},b_p^i,C_{p-1}^i\rangle$ \textbf{do}
        \item \hspace{1em}$s_p^{i,1},...,s_p^{i,n}\gets Enc(b_p^i)$ \textcolor{blue}{\Comment{Encode $b_p^i$ into $n$ chunks}}
        \item \hspace{1em}$\boldsymbol{P}(b_p^i)$, $R(b_p^i)\gets M.Gen(s_p^{i,1}, \dots, s_p^{i,n})$
        \item \hspace{1em}\textbf{for} $j\in [n]$ \textbf{do}
        
        \item \hspace{2em}Send $\langle\text{Mb-Dis},R(b_p^i),C_{p-1}^i,s_p^{i,j},P_j(b_p^i)\rangle_{i}$ to $j$. 
        
        \vspace{0.5em}
        \item \textbf{Upon} receiving $\langle\text{Mb-Dis},R(b_p^i),C_{p-1}^i,s_p^{i,j},P_j(b_p^i)\rangle_i$ for the first time \textbf{do}
        \item \hspace{1em} \textbf{If} $TVerf(C_{p-1}^i)$ and $M.Vrf(P_j(b_p^i),s_p^{i,j},R(b_p^i))$ \textbf{do}
        \item \hspace{2em}Store $(R(b_p^i),s_p^{i,j})$
        \item \hspace{2em}$\sigma_j(b_p^i)\gets TSign(R(b_p^i),tsk_j)$
        \item \hspace{2em}Send $\langle\text{Mb-Ack},R(b_p^i),\sigma_j\rangle$ to node $i$

        \vspace{0.5em} 
        \item \textbf{Upon} receiving $\langle\text{Mb-Ack},R(b_p^i),\sigma_j(b_p^i)\rangle$ \textbf{do}
        \item \hspace{1em}$\boldsymbol{Sigs}=\boldsymbol{Sigs}\cup \sigma_j(b_p^i)$
        \item \hspace{1em}\textbf{if} $|\boldsymbol{Sigs}|\geq 2f+1$ \textbf{do}
        \item \hspace{2em}$\sigma(b_p^j)\gets TComb(R(b_p^i),\boldsymbol{Sigs})$. 
        \item \hspace{2em}$C_p^i\gets (R(b_p^i),\sigma(b_p^i))$. \textcolor{blue}{\Comment{$C_p^i$ is the AC of $b_p^i$}}
        \item \hspace{2em}Store $C_p^i$
    \end{algorithmic}
\end{algorithm}

The AC organizes microblocks from the same node into a chained structure, facilitating the resolution of the sequencing issues of microblocks from honest nodes. 
Secondly, once an honest node collects an AC, it can be assured that all honest nodes will eventually obtain the complete microblock. 
Thus, during consensus, nodes can proceed without waiting for full content, avoiding delays due to missing microblocks.





\begin{algorithm}[t]
    \caption{Microblock Retrieval with $R(b_p^j)$ at $i$}
    \label{Alg:Microblock Ret}
    \begin{algorithmic}[1]
        \item Local variables:
        \item $\boldsymbol{Chk}\gets \{\}$  \textcolor{blue}{\Comment{Chunks corresponding to $R(b_p^i)$}}

        \vspace{0.5em}
        \item \textbf{Upon} event$\langle\text{Mb-Retrieval},R(b_p^j)\rangle$
        \item \hspace{1em}\textbf{if} there exists a local chunk $s_p^{j,i}$ with identifier $R(b_p^j)$
        \item \hspace{2em}Broadcast message $\langle\text{Mb-Chk}, R(b_p^j), s_p^{j,i}, P_i\rangle$
        \item \hspace{1em}\textbf{if} $R(b_{p-1}^j)$ has not yet been triggered for retrieval
        \item \hspace{2em}Trigger $\langle\text{Mb-Retrieval},R(b_{p-1}^j)\rangle$ 
            
        \vspace{0.5em} 
        \item \textbf{Upon} receiving $\langle\text{Mb-Chk},R(b_p^j),S_k,P_k\rangle$ for the first time
        \item \hspace{1em} \textbf{if} the microblock corresponding to $R(b_p^j)$ has yet not been decoded
        \item \hspace{2em}$\boldsymbol{Chk}=\boldsymbol{Chk}\cup S_k$
        \item \hspace{1em}\textbf{if} $|\boldsymbol{Chk}| = f+1$
        \item \hspace{2em}Trigger $\langle\text{Mb-Dec},R(b_p^j),\boldsymbol{Chk}\rangle$
        
        \vspace{0.5em}
        \item \textbf{Upon} event $\langle\text{Mb-Dec},R(b_p^j),\boldsymbol{Chk}\rangle$

        \item \hspace{1em}$b'\gets Dec(\boldsymbol{Chk})$ \textcolor{blue}{\Comment{Decode chunks $Chk$}}
        \item \hspace{1em}Mark $R(b_p^j)$ as available
        \item \hspace{1em}Compute $R(b')$
        \item \hspace{1em}\textbf{if} $R(b')\neq R(b_p^j)$ 
        \item \hspace{2em}Mark $R(b_p^j)$ as an empty microblock
    
    \end{algorithmic}
\end{algorithm}

\subsubsection{Retrieval Phase.}
\label{sec:retrieval detail}
When a node triggers the \(\langle\text{Mb-Retrieval}, R(b_p^j)\rangle\) event, it enters the retrieval phase for \(R(b_p^j)\). 
Upon entering the retrieval phase, node $i$ checks whether it has previously received a chunk \(s_i\) corresponding to \(R(b_p^j)\). If so, it broadcasts the message \(\langle\text{Mb-Chk}, R(b_p^j), s_i, P_i\rangle\). It then checks whether an Mb-Retrieval event for the predecessor microblock of \(b_p^j\) has ever been triggered; if not, it recursively triggers the retrieval event for the predecessor microblock. 

Each node will trigger the retrieval for each microblock at most once. This means that once consensus has been reached and the microblock has been executed, it can be safely removed from memory without concerning that other nodes will need to fetch contents of the microblock. 

Upon receiving an \(\langle \text{Mb-Chk}\rangle\) message corresponding to \(R(b_p^i)\), the node verifies the accompanying Merkle proof. If the verification is successful, the chunk is added to the local mempool; otherwise, the message is ignored.

When a node collects \(f+1\) chunks for \(R(b_p^j)\), decoding is performed to reconstruct \(b'\). The decoded\(b'\) is then re-encoded using \(Enc\) into a set of chunks \(\boldsymbol{Chk'}\) and its Merkle root is recomputed. If the recomputed Merkle root  matches \(R(b_p^j)\), decoding is successful. Otherwise, \(b_p^j\) is marked as empty (\texttt{Nil}). Future consensus on this microblock will regard it as empty.

Additionally, if a node receives \(2f+1\) Mb-Chk messages for \(R(b_p^i)\) before broadcasting its own chunk, it can consider the Retrieval of corresponding microblock as completed. Since at least \(f+1\) messages are from honest nodes and seen by others, all honest nodes will eventually reach the decoding threshold.

\subsubsection{Phase Triggering.}
We defer the discussion on when to trigger the Dispersal phase and Retrieval phase here. By default, the \(\langle\text{Mb-Dispersal}\rangle\) and \(\langle\text{Mb-Retrieval}\rangle\) events are triggered automatically, but when and how to trigger them depend on the specific requirements of BFT consensus.

A straightforward approach is to periodically trigger the $\langle\text{Mb-Dispersal}\rangle$ event and have nodes broadcast their AC(s) once they are collected. Upon receiving an AC message, the corresponding $\langle\text{Mb-Retrieval}\rangle$ event is triggered. Once all nodes have triggered the Retrieval event, they can decode the complete content of the microblock, thereby achieving totality. Consequently, each node can maintain a local mempool in a chained structure, allowing microblocks to be efficiently shared among nodes.

We will discuss in Section~\ref{Sec:bft-protocol} how we trigger the $\langle$Mb-Dispersal$\rangle$ and $\langle$Mb-Retrieval$\rangle$ events to meet our requirements when combined with BFT consensus.

\subsection{Analysis  of Imitater}
\label{sec:smp-communication}
\subsubsection{Communication Efficiency.}
To achieve a Shared Mempool (SMP) with the totality property, we employ coding techniques for microblock distribution. Here, we briefly analyze the efficiency of completing the distribution of a microblock.


Assume the microblock size is \( m \), and both signatures and hash values are of the same size $\lambda$ for simplicity. During the Dispersal phase, the distributor needs to encode the microblock into \( n \) chunks, each of size \( \frac{m}{f+1} \), with each Mb-Dis message containing an identifier (\( \lambda \)), an availability certificate (AC) (\( \lambda \)), a chunk (\( \frac{m}{f+1} \)), and a Merkle proof (\( \lambda \log n \)). When receiving an Mb-Dis message, other nodes send back an Mb-ack message containing a partial signature (\( \lambda \)). The total communication cost for this phase is \( 2n\lambda + n\lambda \log n + \frac{mn}{f+1} \). For large messages (\( m \gg n \)), the Dispersal cost approximates \( 3m \) with \( n = 3f + 1 \).



In the Retrieval phase, each node broadcasts a message with an identifier (\( \lambda \)), a chunk (\( \frac{m}{f+1} \)), and a Merkle proof (\( \lambda \log n \)). The total communication cost for this phase is \( \frac{mn^2}{f+1} + n^2\lambda \log n \), which approximates \( 3mn \) for large microblocks.


In summary, the total communication overhead for completing microblock distribution is \( 2n\lambda + (n^2+n)\lambda \log n + \frac{m(n^2+n)}{f+1} \), and for large microblocks, the cost is dominated by the Retrieval phase at \( 3mn \).

\subsubsection{Correctness.} Here we briefly show the correctness of Imitater.
\begin{lemma}[Dispersal Termination]
    \label{lemma:Dispersal_termination}
    Every honest node initiating microblock Dispersal will eventually obtain a corresponding AC after GST.
\end{lemma}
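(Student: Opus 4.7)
The plan is to trace the Dispersal algorithm for an honest initiator $i$ step by step, showing that after GST each required action completes within a bounded time, culminating in the aggregation of $2f+1$ partial signatures into the AC $C_p^i$. I would rely on three ingredients: (a) synchronous delivery between honest nodes after GST, (b) the honest majority $n - f \geq 2f+1$, and (c) the correctness properties of the threshold signature, the Reed--Solomon encoder, and the Merkle scheme.

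First I would observe that since $i$ is honest and has triggered \(\langle\text{Mb-Dispersal}, b_p^i, C_{p-1}^i\rangle\), it possesses a valid predecessor certificate $C_{p-1}^i$ (with a base-case convention for $p=1$) and correctly runs $Enc$ and $M.Gen$ to produce the chunks $s_p^{i,1},\dots,s_p^{i,n}$ with matching Merkle proofs. It then sends a well-formed Mb-Dis message to every replica. After GST, each such message delivered to an honest node $j$ arrives within $\Delta$. At $j$, the check $TVerf(C_{p-1}^i)$ succeeds by unforgeability of the threshold signature applied to the legitimately produced $C_{p-1}^i$, and $M.Vrf(P_j(b_p^i), s_p^{i,j}, R(b_p^i))$ succeeds because $i$ honestly derived the proof. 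The ``first Mb-Dis for position $p$'' guard passes because honest $i$ sends only one Mb-Dis per position. Hence $j$ produces $\sigma_j(b_p^i) \gets TSign(R(b_p^i), tsk_j)$ and returns an Mb-Ack to $i$, which again arrives within $\Delta$ after GST.

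Since there are at least $2f+1$ honest nodes, $i$ eventually collects $2f+1$ valid partial signatures on the same $R(b_p^i)$; by the $(2f+1,n)$ threshold property, $TComb$ then produces a valid aggregated signature $\sigma(b_p^i)$, giving the AC $C_p^i = (R(b_p^i), \sigma(b_p^i))$, which is exactly the termination condition. The main obstacle I anticipate is not the core aggregation step but rather handling events that straddle GST: if $i$ initiated Dispersal before GST, some Mb-Dis or Mb-Ack messages may be arbitrarily delayed in the asynchronous prefix. The argument must therefore invoke eventual delivery only for the post-GST portion and note that retransmission or simply the continued presence of honest receivers after GST suffices to gather the required $2f+1$ acks. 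A minor side point to address is the base case $p=1$, which should be treated as vacuously satisfying the AC-verification precondition by protocol convention.
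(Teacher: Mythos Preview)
Your proposal is correct and follows essentially the same argument as the paper: the honest initiator's Mb-Dis messages reach the at least $2f+1$ honest replicas, each of which replies with a valid Mb-Ack, and $TComb$ then yields the AC. The paper dispatches this in two sentences, whereas you additionally spell out the verification checks, the GST-straddling case, and the $p=1$ base case, but the underlying approach is identical.
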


\begin{proof}
    Honest nodes distribute the Mb-Dis message to all nodes, and other honest nodes respond with Mb-Ack messages upon receipt. The sender can collect Mb-Ack messages from at least \(2f+1\) honest nodes and synthesize the final signature from the partial signatures to form the AC.
\end{proof}

While Lemma~\ref{lemma:Dispersal_termination} ensures that nodes engaging in honest distribution will collect a certificate, it is also essential to guarantee that all other honest nodes can obtain the corresponding microblock. Now, we show that any node could collect at most one valid AC while distributing microblock at position $p$.

\begin{lemma}[Microblock Uniqueness]
    \label{lemma:microblock uniqueness}
    For any position \( p \) on any mempool chain \( i \), at most one microblock can have a valid AC.

\end{lemma}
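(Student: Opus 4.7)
The plan is to argue by contradiction, leveraging the quorum intersection property of the $(2f{+}1,n)$ threshold signature scheme together with the first-come-first-sign rule enforced in Algorithm~\ref{Alg:Microblock Dis}.

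First, I would assume for contradiction that two distinct microblocks $b_p^i$ and $\tilde{b}_p^i$ at the same position $p$ of chain $i$ both possess valid ACs. Since the identifier of a microblock is its Merkle root, and we will treat distinct microblocks as having distinct identifiers $R(b_p^i) \neq R(\tilde{b}_p^i)$, each valid AC consists of an aggregated signature that, by the unforgeability of the threshold signature scheme and the definition of $TVerf$, must have been synthesized from partial signatures contributed by at least $2f+1$ distinct nodes on the corresponding identifier.

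Next, I would apply a standard quorum-intersection argument. Let $S$ and $\tilde{S}$ be the two sets of signers of size $\geq 2f+1$ each. Since there are only $n = 3f+1$ nodes in total, $|S \cap \tilde{S}| \geq (2f+1) + (2f+1) - (3f+1) = f+1$, so the intersection contains at least one honest node $h$. I would then appeal to Algorithm~\ref{Alg:Microblock Dis}: an honest node processes an Mb-Dis message for position $p$ from node $i$ only the first time such a message arrives (line ``\textbf{Upon} receiving \ldots for the first time''), and therefore emits at most one partial signature for the pair $(i,p)$, binding it to exactly one identifier. Hence $h$ cannot have contributed to both ACs, contradicting $h \in S \cap \tilde{S}$.

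The main obstacle, and the step most worth stating carefully, is the reliance on the first-seen filtering rule inside the Dispersal handler: without that check (or if it were applied at the wrong granularity, e.g.\ per identifier rather than per position), an honest node could inadvertently sign two distinct identifiers for the same $(i,p)$, and uniqueness would fail. I would therefore explicitly cite the first-seen guard together with the threshold intersection bound; the cryptographic parts (unforgeability of partial signatures and collision resistance of the Merkle hash underpinning $R(\cdot)$) are used only as black boxes and require no further calculation.
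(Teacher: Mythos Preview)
Your proposal is correct and follows essentially the same contradiction-via-quorum-intersection argument as the paper: two ACs each need $2f{+}1$ signers, so they share an honest node, which violates the first-seen rule in Algorithm~\ref{Alg:Microblock Dis}. If anything, your version is more precise than the paper's, which loosely writes ``at least $2f+1$ honest nodes'' where it means $2f+1$ nodes and then infers an honest one in the intersection; your explicit $|S\cap\tilde S|\ge f{+}1$ bound is the cleaner formulation.
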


\begin{proof}
    Suppose node \( i \) distributes two microblocks, \( b_p^i \) and \( \hat{b}_p^i \), at position \( p \), and collects valid ACs \( C_p^i \) and \( \hat{C}_p^i \) for both. 
    Since an AC is formed by aggregating partial signatures from the ack messages of at least \( 2f+1 \) honest nodes, at least one honest node must have sent ack messages for both microblocks at the same position \( p \). This contradicts the protocol, as an honest node should only ack one microblock per position. Therefore, at most one valid AC can be formed.
    
\end{proof}

Next, we need to ensure that once an AC is formed from honest node, all other honest nodes can obtain the corresponding microblock.


\begin{theorem}[SMP Totality]
    \label{lemma:SMP_avaliablity}
    \label{lemma:Microblock consistency}
    If an honest node collects an AC for a microblock, then all honest nodes will eventually obtain a consistent copy of that microblock.
    
\end{theorem}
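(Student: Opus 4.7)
The plan is to show that whenever an honest node collects a valid AC for $b_p^i$, at least $f+1$ honest replicas must persistently hold Merkle-verified coded chunks of that microblock, and that these chunks suffice for every other honest replica to reconstruct an identical copy through Retrieval.

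First, I would unpack the AC using unforgeability of the $(2f+1,n)$ threshold signature scheme: the aggregate $\sigma(b_p^i)$ can only be produced from $2f+1$ distinct valid partial signatures, of which at least $f+1$ originate from honest replicas. By lines~9--11 of Algorithm~\ref{Alg:Microblock Dis}, each such honest signer ran $M.Vrf(P_j(b_p^i), s_p^{i,j}, R(b_p^i))$ and stored $(R(b_p^i), s_p^{i,j})$ before replying. So at least $f+1$ honest replicas durably hold chunks that, by collision resistance of the Merkle tree, are authentic leaves of the tree committed by $R(b_p^i)$.

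Next, I would argue that these chunks actually propagate. Under the default triggering rule outlined in Section~\ref{sec:smp detail} (the collected AC is broadcast, and its receipt triggers $\langle\text{Mb-Retrieval},R(b_p^i)\rangle$), every honest replica eventually enters the retrieval phase after GST. Line~4 of Algorithm~\ref{Alg:Microblock Ret} then causes each honest chunk holder to broadcast an Mb-Chk message, so every honest replica eventually accumulates at least $f+1$ Merkle-verified chunks and fires the $\langle\text{Mb-Dec}\rangle$ event.

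Finally, I would establish that the decoded output is uniform across honest replicas by case analysis on the committed chunks. If the $n$ chunks certified under $R(b_p^i)$ form a genuine $(f+1,n)$ Reed--Solomon codeword of some message $b$, then any $f+1$-subset decodes to $b$ and re-encoding reproduces the same Merkle root, so every honest replica adopts $b$. Otherwise different honest replicas may decode distinct candidates $b'$, but for any such $b'$ collision resistance forces the Merkle root of $Enc(b')$ to differ from $R(b_p^i)$ (otherwise the committed chunks would coincide with $Enc(b')$, contradicting the non-codeword assumption), so every honest replica uniformly marks $R(b_p^i)$ as empty. The main obstacle I anticipate is precisely this last step: combining the erasure-code decoding behavior under an adversarially crafted commitment with the Merkle-root check to convert potential decoding ambiguity into a single consistent verdict across all honest replicas, rather than allowing the adversary to split honest views between a valid $b$ and the empty symbol.
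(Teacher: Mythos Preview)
Your proposal is correct and follows essentially the same route as the paper: extract $f{+}1$ honest Merkle-verified chunks from the $2f{+}1$ partial signatures underlying the AC, have those chunks propagate via Retrieval so every honest node reaches the decoding threshold, and then use the re-encode-and-compare-root check to force a uniform outcome. The paper phrases the last step as a contradiction (if one honest subset passes verification then the committed leaves coincide with a codeword, so no other subset can fail or decode differently), whereas you phrase it as an explicit case split on whether the committed leaves form a codeword; these are the same argument, and your appeal to Merkle collision resistance in the non-codeword case is precisely the point the paper leaves implicit.
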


\begin{proof}
    We first show that any honest node will ultimately be able to decode a microblock that has the AC. As an AC comprises \(2f+1\) of Mb-Ack messages, of which at least \(f+1\) must originate from honest nodes, this indicates that at least \(f+1\) honest nodes possess distinct chunks. These honest nodes will broadcast their chunks when triggering the retrieval, allowing all honest nodes to decode.
    
    Since each chunk comes with a Merkle tree proof, a node will re-encode after decoding and generate a corresponding Merkle tree, then check whether the newly generated Merkle tree matches the original one. Hence, we require that either (i) every honest node's verification succeeds, or (ii) every honest node's verification fails. We can prove it by contradiction. Let's assume that for a set of segments $Chk$, there exist two subsets $S_1$ and $S_2$ of size $f+1$, corresponding to valid Merkle trees. Without loss of generality, we assume that decoding is successful for $S_1$ and fails for $S_2$.
    
    Since the decoding for $S_1$ is successful, let the decoding result be $b$. Thus, re-encoding $b$ will correctly yield $Chk' = Chk$. According to the principles of erasure codes, using any $f+1$ segments from $Chk'$ will result in consistent decoding. In other words, any subset of $Chk$ will obtain consistent encoding, which contradicts the fact that decoding failed for $S_2$.
    
    Hence, different honest nodes decoding chunks corresponding to the same Merkle root \(R(b_p)\) will obtain consistent results.
    
\end{proof}




   


Since our SMP is maintained in a chained structure, we also need to provide a chain-based consistency constraint for microblocks.




\begin{theorem}[SMP Chain-Consistency]
    \label{Theorem:SMP chain-consistency}
    For any microblock chain \(i\), where \(i \in [n]\) on the SMP, if any honest node $j$ and \(k\) have \(b_p^{i,j}\) and \(b_p^{i,k}\), respectively, and \(b_p^{i,j} = b_p^{i,k}\), then \(b_{p'}^{i,j} = b_{p'}^{i,k}\) for all \(p' \leq p\).


\end{theorem}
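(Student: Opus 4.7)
The plan is to prove Theorem~\ref{Theorem:SMP chain-consistency} by backward induction on the position index, leveraging the chained structure of microblocks together with the uniqueness and totality guarantees established earlier.

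First, I would observe that by the definition of a microblock in Section~\ref{sec:dis-phase}, we have $b_p^i = (R(b_p^i), C_{p-1}^i, t_p^i)$. In particular, each microblock at position $p$ explicitly embeds the availability certificate $C_{p-1}^i = (R(b_{p-1}^i), \sigma(b_{p-1}^i))$ of its predecessor, so the identifier $R(b_{p-1}^i)$ of the predecessor is uniquely determined by $b_p^i$ itself.

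The induction is then performed as follows. For the base case $p'=p$, the equality $b_p^{i,j} = b_p^{i,k}$ is given by hypothesis. For the inductive step, assume $b_{p'}^{i,j} = b_{p'}^{i,k}$ for some $p' \leq p$; I would extract the common predecessor certificate $C_{p'-1}^i$ embedded in this shared microblock, which in particular fixes a common Merkle root $R(b_{p'-1}^i)$. Because $b_{p'}^{i,j}$ was accepted and stored by honest node $j$, the check on line~9 of Algorithm~\ref{Alg:Microblock Dis} (and the analogous path through Retrieval/Decode in Algorithm~\ref{Alg:Microblock Ret}) guarantees that the microblock $b_{p'-1}^{i,j}$ held locally by $j$ has Merkle root equal to the one certified by $C_{p'-1}^i$; the same holds symmetrically for $k$. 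By Lemma~\ref{lemma:microblock uniqueness}, there is at most one microblock at position $p'-1$ on chain $i$ with a valid AC, so both $j$ and $k$ must be pointing at the same underlying microblock. Finally, Theorem~\ref{lemma:Microblock consistency} (SMP Totality) yields that whenever two honest nodes obtain a microblock corresponding to the same certified identifier, the decoded content is identical, hence $b_{p'-1}^{i,j} = b_{p'-1}^{i,k}$. Iterating this step downward from $p$ to any $p' \leq p$ completes the proof.

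The main obstacle I anticipate is making the pedantic link between ``$j$ has $b_{p'-1}^{i,j}$'' and ``$j$ has a microblock whose Merkle root equals the one endorsed by the AC in $b_{p'}^{i,j}$''. The chain definition requires any stored microblock at position $p'$ to carry a verified AC of its predecessor (otherwise it would not have passed the dispersal acceptance checks), so the predecessor stored locally must be the one identified by that AC; once this bridge is made explicit, uniqueness and totality close the gap mechanically and the rest is just induction bookkeeping.
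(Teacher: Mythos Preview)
Your proposal is correct and matches the paper's approach: the paper's proof is a two-sentence version of exactly your backward recursion---from $b_p^{i,j}=b_p^{i,k}$ extract the common predecessor identifier $R(b_{p-1}^i)$ via the embedded AC, invoke Theorem~\ref{lemma:Microblock consistency} (SMP Totality) to conclude $b_{p-1}^{i,j}=b_{p-1}^{i,k}$, and recurse. Your appeal to Lemma~\ref{lemma:microblock uniqueness} is harmless but unnecessary, since the shared AC already pins both nodes to the same Merkle root, at which point Totality alone closes the inductive step.
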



\begin{proof}
    Given that \(b_p^{i,j} = b_p^{i,k}\), it follows that the predecessor identifiers included within the microblock are also equal, that is, \(R(b_{p-1}^{i,j}) = R(b_{p-1}^{i,k})\). By Lemma~\ref{lemma:Microblock consistency}, we have \(b_{p-1}^{i,j} = b_{p-1}^{i,k}\), and this logic can be recursively applied.

\end{proof}

\section{Towards Practical BFT System}
\label{Sec:bft-protocol}

In this section, we describe how to utilize Imitater to implement a partially synchronous consensus protocol, namely Imitater-BFT protocol.

\subsection{Building BFT based on Imitater}
\label{sec:building-bft}
Imitater-BFT can be seen as a traditional BFT protocol where the leader's task of packaging transactions into blocks is replaced by packaging microblock identifiers. Each node then runs a Shared MemPool to share microblocks, resulting in a more efficient BFT protocol. Unlike other Shared MemPool-based protocols, Imitater-BFT follows a \emph{dispersal-consensus-retrieval} framework (Figure~\ref{fig:Imitater_BFT}), where the \emph{dispersal} and \emph{retrieval} phases are derived from Imitater’s Dispersal and Retrieval phases, respectively.

\begin{figure}
    \centering
    \includegraphics[width=0.7\linewidth]{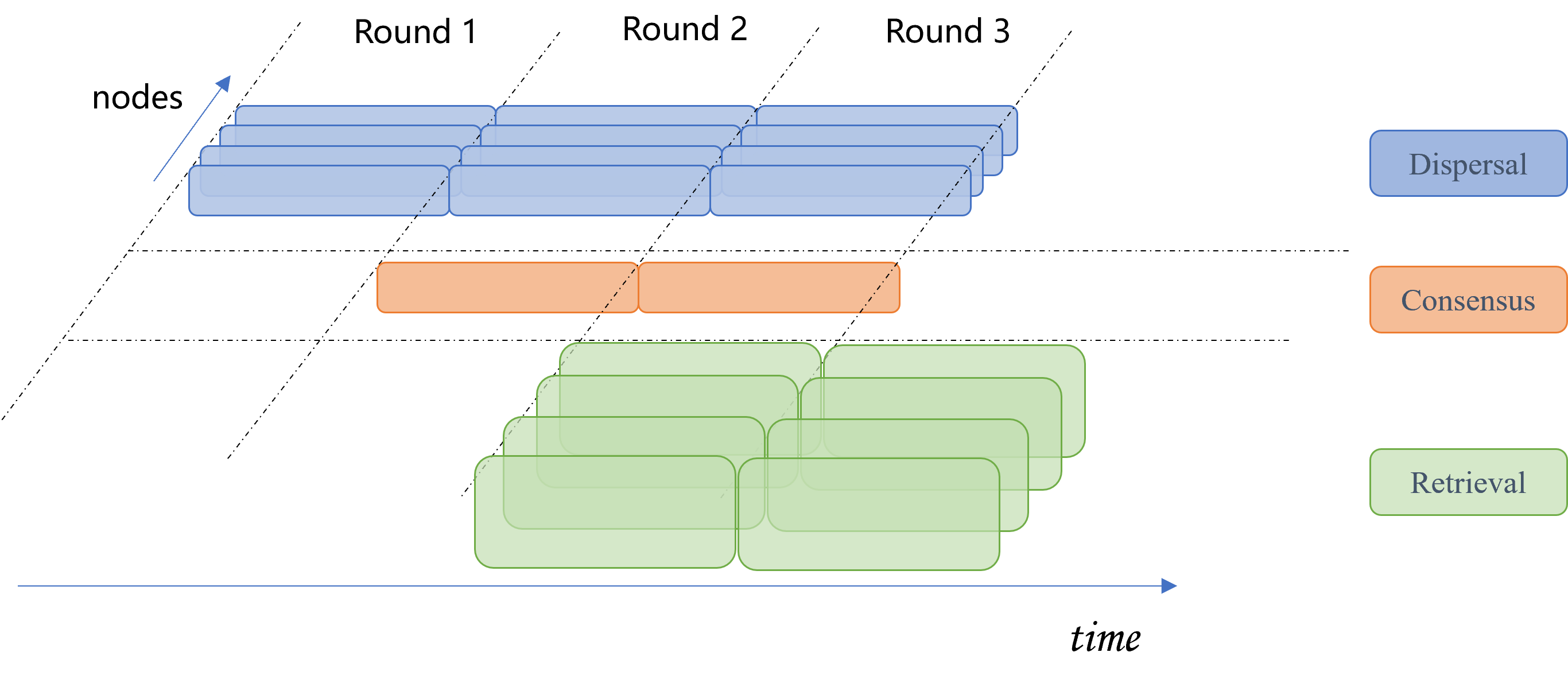}
    \caption{The process of integrating Imitater into consensus}
    \label{fig:Imitater_BFT}
\end{figure}


The Dispersal, Retrieval, and Consensus phases run in parallel. Nodes continuously perform microblock dispersal, collect the corresponding  ACs, and report the latest ACs to the leader. At the start of each round, the leader aggregates the reported ACs and microblock identifiers into a candidate block for consensus.

Once a candidate block reaches consensus, nodes initiate the Mb-Retrieval process for all microblocks referenced within it and wait for these microblocks to become available. As discussed in Section~\ref{sec:retrieval detail}, retrieving a microblock recursively triggers the retrieval of its predecessor microblocks that have not yet been retrieved. This ensures that the entire chain of blocks is fully available, maintaining protocol consistency.
Once all the microblocks within the block have been decoded and are available, the confirmation process for the block begins. Specifically, all requests within the microblocks are extracted and ordered based on position and timestamp to form a transaction list. Nodes then execute these transactions sequentially and return the results to the clients.

To mitigate the risk of a dishonest leader censoring certain nodes' microblocks, we adopt a consensus protocol with a built-in leader-rotation mechanism. 
We provide detailed steps in Appendix~\ref{sec:imitater-fhs} for compiling Imitater into Fast-HotStuff~\cite{jalalzai2023fast}, along with a formal proof of correctness.


\subsection{Bandwidth Fluctuation Adaptivity}
\label{sec:trigger-dispersal}  

Most BFT protocols are designed for stable network environments where communication bandwidth between nodes is assumed to be constant.  
However, in real-world scenarios, bandwidth often fluctuates over time. In conventional protocols that involve message distribution and vote collection, bandwidth fluctuations can lead to speed differences among nodes. As a result, faster nodes are forced to wait for slower ones, causing inefficiencies.  
For example, in Narwhal~\cite{danezis2022narwhal}, microblock distribution progresses in rounds, each requiring \(2f+1\) acknowledgment messages. The overall progress is effectively limited by the \(f+1\)-th slowest node, leading to underutilization of available bandwidth.

We consider an environment where the bandwidth of all nodes is governed by the same random process but varies independently across nodes.

\begin{definition}[Bandwidth Adaptivity]
\label{def:bandwidth-adapitivity}
There exists a constant \(\beta > 0\) such that, for any time \(t\), if the bandwidth \(B_i(t)\) of each node $i$ fluctuates independently around a common mean value and satisfies \(B_i(t) \ge \beta,  \forall t\), then the throughput of the Imitator-BFT protocol remains constant over time.
\end{definition}

\subsubsection{Benefits of Decoupling Dispersal and Retrieval.}
In Imitater-BFT, the Retrieval phase is triggered after a block is committed. However, the exact timing is flexible and can be adjusted based on deployment needs. Decoupling microblock distribution into separate Dispersal and Retrieval phases enables the system to better adapt to varying bandwidth conditions.

As analyzed in Section~\ref{sec:smp-communication}, Dispersal involves relatively low message volume (about \(3m\) for microblock size \(m\)) compared to Retrieval (\(3mn\)). During the consensus phase, the messages are relatively small, consisting only of identifiers and essential information.
By allocating stable bandwidth to Dispersal and Consensus messages and adjusting remaining resources for Retrieval, we eliminate the need to wait for the \(f+1\)-th slowest node during Dispersal as all nodes have sufficient bandwidth resources during the Dispersal phase, improving adaptability.  

Assuming that for each node \(i\), the available bandwidth \(B_i(t)\) has a known and consistent mean across time, the dispersal rate can be set proportional to \(1/n\) of the average bandwidth, as the size of dispersal messages in one round is approximately \(1/n\) of the retrieval messages size. Hence, Definition~\ref{def:bandwidth-adapitivity} was satisfied.

\subsubsection{Dynamic Dispersal Triggering.} 
In more general scenarios, average bandwidth is typically unknown or fluctuates, making hardcoded rates impractical. To alleviate fluctuating bandwidth, we propose a mechanism for dynamically adapting the Dispersal phase in Imitater-BFT. 
We dynamically adjust the dispersal rate by monitoring the retrieval queue. Each dispersal of a microblock corresponds to a retrieval event. By observing the gap between the number of dispersals (\(N_d\)) and completed retrievals (\(N_r\)), we infer network conditions:
\begin{itemize}
    \item If the gap is small, indicating sufficient bandwidth, the dispersal rate is increased by reducing the interval \(\tau\) between events.  
    \item If the gap is large, suggesting overload, \(\tau\) is increased to reduce the dispersal rate. 
\end{itemize}

\begin{algorithm}[t]  
\caption{Triggering Mb-Dispersal for Node \(i\)}  
\label{Alg:triggerMBD}  
\begin{algorithmic}[1]  
\item Local Variables:  
\item \(\tau \gets 0\) \textcolor{blue}{\Comment{Time interval between dispersal events}}  
\item \(N_d, N_r \gets 0\) \textcolor{blue}{\Comment{Completed Dispersals and Retrievals}} 

\vspace{0.5em}
\item \textbf{while} (true) \textbf{do}  
\item \hspace{1em}Sleep($\tau$) \textcolor{blue}{\Comment{Control Dispersal rate}}
\item \hspace{1em}$b_h^i\gets \langle i,h,\boldsymbol{t} \rangle$ \textcolor{blue}{\Comment{Extract a set of requests $\boldsymbol{t}$} from pending requests}
\item \hspace{1em}Wait until $C_{h-1}^i$ formed
\item \hspace{1em}Trigger $\langle \text{Mb-Dispersal},b_h^i,C^i_{h-1}\rangle$
\item \hspace{1em}\textbf{if} \(N_d - N_r \geq t\): \(\tau \gets \tau + \alpha\) \textcolor{blue}{\Comment{Increase interval}}  
\item \hspace{1em}\textbf{else}: \(\tau \gets \tau - \alpha\) \textcolor{blue}{\Comment{Decrease interval}}  
\end{algorithmic}  
\end{algorithm}  

Algorithm~\ref{Alg:triggerMBD} outlines the dynamic triggering mechanism, which adapts dispersal frequency based on the disparity between \(N_d\) and \(N_r\). The metrics \(N_d\) (dispersals) and \(N_r\) (completed retrievals) guide the adjustment of \(\tau\), ensuring efficient bandwidth utilization without overloading the system. Since retrieval operates asynchronously, it can be offloaded to auxiliary machines for enhanced scalability, as demonstrated in Narwhal~\cite{danezis2022narwhal}.

\subsection{Practical Problems}
\label{sec:practical-problems}
Here we discuss some practical challenges encountered in BFT systems.
\subsubsection{Order Keeping.}

Maintaining the order of transactions is often overlooked in prior work, yet it is crucial when transactions depend on one another. For example, in the UTXO model, \( tx' \) may depend on funds from \( tx \), or in the account model, \( tx \) might enable \( tx' \) by making the account balance positive. If two dependent transactions are delegated to the same node but included in different microblocks, a Byzantine leader could withhold the microblock containing \( tx \) while allowing \( tx' \) to reach consensus first, rendering \( tx' \) invalid.
Thus, the final execution order must reflect the transaction submission order.

\begin{definition}
    \textbf{Order Keeping.} For any two transactions, $tx$ and $tx'$, submitted by a client to an honest node, the order in which they are submitted must be preserved in the final execution order.
\end{definition}


In Imitater-BFT, if \( tx \) and \( tx' \) are submitted sequentially by a client to an honest node \( r \), and placed in the same microblock, their execution order will naturally match the submission order. If placed in different microblocks \( b \) and \( b' \), \( b \) will precede \( b' \) since honest nodes package transactions in order. In case \( b \) is missed, \( b' \) will implicitly include \( b \) when added to a block, ensuring that \( b \) is committed before or simultaneously with \( b' \). Thus, the transaction order submitted to honest nodes is preserved in the final execution.

\subsubsection{Over Distribution.}
In decoupled architectures, some protocols\cite{gai2023scaling,hu2022leopard,hu2023data} allow nodes to independently distribute blocks without round-based synchronization constraints. While this provides flexibility, it also exposes the system to the risk of flooding attacks. Malicious nodes can flood the network with an excessive number of data blocks, forcing honest nodes to store non-consensus-related blocks in memory, which increases the risk of memory overflow.

In Imitater-BFT, malicious nodes might engage solely in dispersal, disregarding retrieval. By using up available bandwidth to continually perform dispersal, they increase the retrieval burden on other nodes, while decreasing their own dispersal volume. This doesn't reduce total throughput but skews the proportion of malicious microblocks in the consensus block, compromising fairness.

We enforce a rule that the position of any microblock distributed by node \( i \) must not exceed a preset threshold \( k \) relative to the position of its latest committed microblock.  If exceeded, nodes withhold ack responses until the position is within the threshold, preventing malicious flooding and network congestion.

\subsubsection{Unbalanced Workload.}
In practical deployment scenarios, clients may send transaction requests to nearby or trusted nodes, causing uneven workloads across nodes. This can lead to some nodes being overloaded while others remain idle. Stratus~\cite{gai2023scaling} mitigates this by allowing overloaded nodes to outsource microblock distribution tasks to idle nodes. However, this introduces additional latency and may require multiple outsourcing requests to complete the task.

In contrast, the Imitater-BFT protocol handles this issue more efficiently. During microblock distribution, the retrieval phase’s communication load is evenly distributed across all nodes, ensuring the distributor shares a similar load as the others. Idle nodes can participate by packaging and distributing empty microblocks, helping to balance the workload and alleviates overloaded nodes.

\section{Evaluation}
\label{Sec:evaluation}
\subsection{Implementation and Experimental Setup}
We implemented a prototype of the Imitater protocol in Golang, utilizing threshold signatures\footnote{https://github.com/dfinity-side-projects/bn}~\cite{boneh2001short} and employing Reed-Solomon codes\footnote{https://github.com/templexxx/reedsolomon}~\cite{reed1960polynomial} for erasure coding.



To evaluate its performance, we compared it with Stratus, a state-of-the-art protocol, using a modified version of bamboo-stratus\footnote{https://github.com/gitferry/bamboo-stratus}, where we replaced the memory pool logic with Imitater. The comparison was made with Stratus-HS from bamboo-stratus, which, to our knowledge, demonstrates superior throughput as the number of replicas increases. The consensus part was HotStuff.

Experiments were conducted on Cloud SA5.2XLARGE16 instances within a single datacenter, each with 8 vCPUs and 3 Gbps of internal bandwidth. Each replica ran on a separate EC2 instance, and we simulated a WAN environment with 100 Mbit/s replica bandwidth using \emph{tc}.


Four instances ran client processes, continuously sending requests to the replicas. Latency was measured as the time from when a node (the microblock proposer) receives a request to when it completes the consensus process. Throughput is the number of requests committed per second, averaged across all replicas. All measurements were taken after the system's performance had stabilized.
\subsection{Performance}

To evaluate the performance of Imitater, we conducted a thorough assessment of Imitater and compared with Stratus, the SOTA SMP protocol.

\begin{figure}[!th]
    \centering
    \begin{tikzpicture}
        \begin{axis}[
        width=7cm,
        height=4.5cm,
        xlabel={Number of faulty nodes $f$},
        ylabel={Throughput (Kops/s)},
        xmin=0, xmax=16,
        ymin=0, ymax=70,
        xtick={0,1,2,3,4,5,8,12,16},
        ytick={10,30,50},
        grid=both,
        grid style={dashed},
        major tick style={thick,black},
        xtick pos=bottom,
        ytick pos=left,
        tick align=center,
        legend style={
                fill=white,
                fill opacity=0.5,
                draw opacity=0.5,
                text opacity=1
            },
        ]   
        \addplot[blue,thick,mark=triangle,opacity=0.7] coordinates {
            (0,59.3)(1,30.5)(2,22.8)(3,17.4)(4,14.7)(5,12.5)(6,10.3)(7,8.8)(8,8)
            (9,7.7)(10,6.6)(11,6.2)(12,5.7)(13,5.4)(14,5)(15,4.8)(16,4.3)
        };
         \addplot[red,thick,mark=square,opacity=0.7] coordinates {
            (0,25.4)(1,25.4)(2,25.4)(4,25.4)(8,25.4)(16,25.4)
        };
        \addlegendentry{Stratus-HS}
        \addlegendentry{Imitater-HS}
    \end{axis}
    \end{tikzpicture}
    \caption{Throuput with $n$=49 and bandwidth settiing to 100Mbps, varying faulty nodes number $f$}
    \label{fig:Stratus-HS_with_faulty_nodes}
\end{figure}
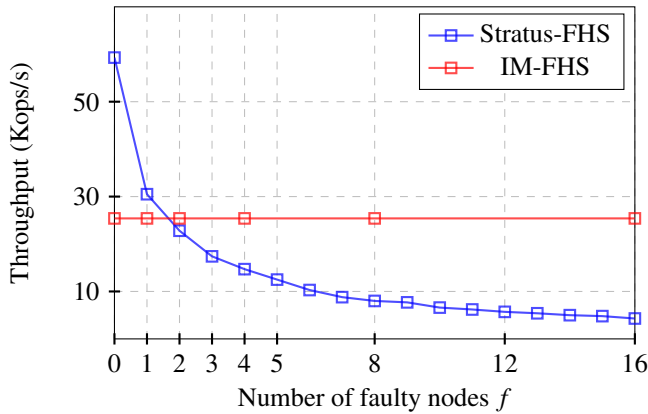


First, we analyzed the impact of faulty nodes on throughput. Figure~\ref{fig:Stratus-HS_with_faulty_nodes} shows that Stratus-HS throughput declines sharply as the number of faulty nodes increases from 0 to 16, in a 49-node system with 100Mbps bandwidth. For instance, just one faulty node reduces throughput by nearly half. In contrast, Imitater-HS maintains constant throughput regardless of faulty nodes. With one faulty node, Imitater-HS achieves 0.8$\times$ the throughput of Stratus-HS, and with 16 faulty nodes, it reaches 6$\times$ the throughput.



\begin{figure}[!ht]
    \centering
    \begin{subfigure}{0.48\textwidth}
        \centering
        \begin{tikzpicture}
            \begin{axis}[
            width=\textwidth,
            height=5cm,
            xlabel={Number of nodes $n$},
            ylabel={Throughput (Kops/s)},
            xmin=0, xmax=260,
            ymin=0, ymax=50,
            xtick={4,25,49,100,256},
            ytick={10,30,50},
            grid=both,
            grid style={dashed},
            major tick style={thick,black},
            xtick pos=bottom,
            ytick pos=left,
            tick align=center,
            legend style={
                fill=white,
                fill opacity=0.5,
                draw opacity=0.5,
                text opacity=1
            },
            ]
            \addplot[blue,thick,mark=triangle,opacity=0.7] coordinates {
                (4,48.3)(7,28.7)(16,12.8)(25,8.5)(49,4.3)(100,2.3)
            };
            \addplot[red,thick,mark=square,opacity=0.7] coordinates {
                (4,40.1)(7,35.4)(16,27.2)(25,26.5)(49,25.4)(100,21.5)(200,17)(256,14.8)
            };
            \addplot[black,dashed,mark=triangle,opacity=0.9] coordinates {
                (100,2.3)(200,0)
            };
            \addlegendentry{Stratus-HS}
            \addlegendentry{Imitater-HS}
            \end{axis}
        \end{tikzpicture}
        \caption{Throughput vs. number of nodes.}
        \label{fig:tp_vs_nodes}
    \end{subfigure}
    \hfill
    \begin{subfigure}{0.48\textwidth}
        \centering
        \begin{tikzpicture}
            \begin{axis}[
            width=\textwidth,
            height=5cm,
            xlabel={Number of nodes $n$},
            ylabel={Latency (ms)},
            xmin=4, xmax=260,
            ymin=0, ymax=10000,
            ymode=log,
            xtick={4,25,49,100,256},
            ytick={100,1000,10000},
            grid=both,
            grid style={dashed},
            major tick style={thick,black},
            xtick pos=bottom,
            ytick pos=left,
            tick align=center,
            legend style={
                fill=white,
                fill opacity=0.5,
                draw opacity=0.5,
                text opacity=1
            },
            ]
           \addplot[blue,thick,mark=triangle,opacity=0.7,error bars/.cd,
                    y dir=both,
                    y explicit,
                    error bar style={thick}] coordinates {
                (4,33)+-(0,2)
                (7,77)+-(0,10)
                (13,184)+-(0,15)
                (16,262)+-(0,50)
                (25,628)+-(0,150)
                (49,2250)+-(0,250)
                (100,5565)+-(0,500)
            };
            
            \addplot[red,thick,mark=square,opacity=0.7,
                    error bars/.cd,
                    y dir=both,
                    y explicit,
                    error bar style={thick}] coordinates {
                (4,37)+-(0,5)
                (7,66)+-(0,10)
                (16,163)+-(0,50)
                (25,279)+-(0,150)
                (49,324)+-(0,200)
                (100,423)+-(0,250)
                (200,617)+-(0,260)
                (256,800)+-(0,300)
            };
           
            \addplot[black,dashed,mark=triangle,opacity=0.9] coordinates {
            (100,5565)(200,10000)
            };
            \addlegendentry{Stratus-HS}
            \addlegendentry{Imitater-HS}
            \end{axis}
        \end{tikzpicture}
        \caption{Latency vs. number of nodes.}
        \label{fig:lat_vs_nodes}
    \end{subfigure}

    \caption{Performance comparison of Imitater-HS and Stratus-HS with bandwidth setting to 100 Mbps and up to $n/3$ nodes being faulty. (a) Throughput vs. nodes. (b) Latency vs. nodes.}
    \label{fig:performance_comparison}
\end{figure}

Figure~\ref{fig:tp_vs_nodes} demonstrates how throughput varies with the number of nodes under maximum fault tolerance, i.e., the number of faulty nodes is $n/3$. The throughput of Stratus-HS decreases much faster than Imitater-HS as the number of nodes increases. 
At 100 nodes, Imitater-HS achieves 21.5 Kops/s, nearly 9$\times$ that of Stratus-HS (2.3 Kops/s). 
The throughput decline of Stratus-HS is primarily due to all malicious nodes requesting all microblocks from each honest node, which consumes the honest nodes' bandwidth by forcing them to send these redundant microblocks, thereby impacting throughput. Thanks to the design of Imitater-HS, malicious nodes cannot make malicious requests, thus avoiding any performance degradation. In contrast, as the number of nodes increases under maximum fault tolerance, the number of malicious nodes also rises, leading to a noticeable performance decline in Stratus-HS.


Similarly, the latency in Stratus-HS is affected by attacks from malicious nodes. Figure~\ref{fig:lat_vs_nodes} shows how latency changes as the number of nodes increases. It can be observed that Imitater-HS maintains a relatively low latency, while Stratus-HS experiences a sharp increase in latency as the number of nodes grows. Specifically, with 100 nodes, Stratus-HS's latency reaches 5565 ms, whereas Imitater-HS's latency remains around 550 ms.

\section{Conclusion}
\label{Sec:conclusion}
In this paper, we propose a novel SMP protocol, namely Imitater, to improve the efficiency of microblock distribution with Byzantine faulty nodes under partially synchronous setting. Imitater ensures that all microblocks are correctly distributed, available, and ordered even in the presence of faulty nodes. Imitater is easy to integrate into a BFT protocol achieving improved performance by address specific practical problems and better bandwidth use. Our experiments, conducted in a large-scale node deployment environment, showed that the protocol is efficient, maintaining high throughput and low latency, even when some nodes are faulty.  


\begin{credits}
\subsubsection{\ackname}
This work was supported by National Natural Science Foundation of China (Grant 62301190), Shenzhen Colleges and Universities Stable Support Program (Grant GXWD20231129135251001), National Key Research and Development Program of China (Grant 2023YFB3106504) and Major Key Project of PCL (Grant PCL2023A09).

\subsubsection{\discintname}
The authors have no competing interests to declare that are
relevant to the content of this article.

\end{credits}
%
%

\bibliographystyle{splncs04}
\bibliography{ref}

\appendix
\section{Integrating Imitater into Fast-HotStuff}
\label{sec:imitater-fhs}

In this section, we illustrate how to utilize Imitater to implement a partially synchronous consensus protocol, namely Imitater-BFT protocol.

\subsection{Overview}


As shown in Figure~\ref{fig:Imitater_BFT}, Imitater family BFT consensus works in a \emph{disperse-consensus-retrieval} framework, where the \emph{disperse} and \emph{retrieval} parts are derived from the Dispersal and Retrieval phases, respectively.  
In each round of consensus, all nodes disperse microblocks, collect corresponding ACs and report them to the leader. The leader packages the latest ACs reported by all nodes, along with the corresponding microblock identifiers, into a block for consensus. When the block carrying ACs is broadcast, the retrieval part can be triggered.


To prevent a malicious leader from censoring certain nodes' microblocks by deliberately excluding some fully dispersed microblocks from the proposal, we integrate Imitater into a leader-rotation BFT framework. 
Since our SMP is maintained in a chain structure where each microblock contains the AC of its predecessor, all uncommitted predecessor blocks on the chain of a microblock included in a block are implicitly included in that block as well. With the leader-rotation mechanism, as long as an honest node is elected as the leader, even if certain microblocks were censored by a malicious leader in a previous round, subsequent honest leaders can include these censored or omitted microblocks in the consensus process without additional cost. This design ensures 2/3 \emph{chain quality}, meaning that at least 2/3 of the final consensus content originates from honest nodes.


\subsection{Detail of Imitater-FHS}
We now take Fast-HotStuff~\cite{jalalzai2023fast}, an improved version of HotStuff~\cite{yin2019hotstuff}, as an example and present the details of how we integrate Imitater into a BFT protocol. Imitater-FHS operates in a pipelined manner, with the leader changing in each round. Note that Imitater is compatible with most SMP-based BFT protocols.

\begin{algorithm}[!th]
\caption{Utilities} 
\label{alg:utilities}
\begin{algorithmic}[1] 
\Procedure{CreateBlock}{$v$, $qc$, $aggQc$,$B'$}
\item \hspace{1em} $B.view \gets v$
\item \hspace{1em} $B.qc \gets qc$
\item \hspace{1em} $B.aggQc \gets aggQc$
\item \hspace{1em} $B.parent \gets B'$
\item \hspace{1em} \textbf{for} $i\in [n]$
\item \hspace{2em} $B.mbs\gets B.mbs\cup hC^i $
\item \hspace{1em} \textbf{Return} $B$
\EndProcedure
\vspace{0.5em}
\Procedure{CreateQC}{$v$, $\boldsymbol{V_s}$}
\item \hspace{1em}$qc.view\gets v$
\item \hspace{1em}$qc.block\gets m.block:m\in \boldsymbol{V_s}$
\item \hspace{1em}$qc.sigs\gets Tcomb (qc.block,\{m.sig|m\in\boldsymbol{V_{s}} \})$ 
\item \hspace{1em}\textbf{Return} $qc$
\EndProcedure
\vspace{0.5em}
\Procedure{CreateAggQC}{$v$, $\boldsymbol{N_s}$}
\item \hspace{1em}$aggQc.qc_{set}$$\gets$Extract Quorum Certificates from $\boldsymbol{N_s}$
\item \hspace{1em}\textbf{Return} $aggQc$
\EndProcedure
\vspace{0.5em}
\Procedure{SafeProposal}{$P$,$v$}
\item \hspace{1em}$B\gets P.block$
\item \hspace{1em}\textbf{for} $C^j\in B.mbs$
\item \hspace{2em}\textbf{if} $C^j$ is not valid
\item \hspace{3em}\textbf{Return} False.
\item \hspace{1em}\textbf{if} $B.qc$
\item \hspace{2em}\textbf{Return} $(B.view \geq v)\wedge (B.view=qc.view+1)$
\item \hspace{1em}\textbf{if} $B.aggQc$
\item \hspace{2em}$hqc\gets$ highest QC in $aggQc$
\item \hspace{2em}\textbf{Return} $B$ extends from $hqc$

\EndProcedure

\end{algorithmic}
\end{algorithm}

\subsubsection{Utilities}
Before detailing Imitater-FHS, we introduce some useful utilities, as shown in Algorithm~\ref{alg:utilities}.

\textbf{View.} The protocol operates in rounds, each referred to as a \emph{view}. Each view has a designated leader and is assigned a view number. A deterministic algorithm is used to select the leader, following a round-robin scheme, so that every node knows which node is the leader for the current view. The process of switching the leader is called view change. Note that even in the normal case, where the leader performs honestly, view change occurs due to the leader-rotation scheme.

\textbf{Quorum Certificate (QC) and Aggregated QC.} 
When the leader starts a proposal, the leader collects a set $V_s$ consisting of \(n-f\) Vote messages together with partial signatures from the previous view, it runs procedure \emph{CreateQC}, using \(TComb\) to aggregate the partial signatures into a \emph{Quorum Certificate} (QC). A QC proves that at least $f+1$ honest nodes receive the block in the normal case.


If the leader from the previous view fails, causing the nodes to not vote properly and triggering the view change, all the nodes send New-View messages and their known highest QCs to the new leader. The new leader starts the \emph{CreateAgg} procedure to generate an \emph{Aggregated QC} (AggQC) by concatenating $2f+1$ QCs upon receiving a set $N_s$ consisting of $2f+1$ new view messages. An AggQC proves that at least $f+1$ honest nodes admit AggQC as the highest QC.



\textbf{Block Structure.} For proposing a block, the leader runs the \emph{CreateBlock} procedure to construct a \emph{block}. The block includes the view number \( v \) for the current view, a QC of the previous block (or an AggQC), the latest position ACs reported by all nodes and the hash of the parent block. By default, the first block is empty.

\textbf{SafeProposal.} Upon receiving the proposal message, node \(i\) executes the \emph{SafeProposal} procedure to check the validity of proposal $P$ in current view. This involves verifying the legitimacy of all included ACs and assessing whether the block is derived from 1) the highest QC in the normal case or 2) the highest QC in AggQC when the previous leader fails.

\subsubsection{Imitater-FHS Protocol}

By combining the utilities in Algorithm~\ref{alg:utilities}, we obtain Imitater-FHS, as shown in Algorithm~\ref{Alg:Imitater_BFT}. Note that we omit the Dispersal phase of SMP in Algorithm~\ref{Alg:Imitater_BFT} as 1) it can be executed in parallel with Algorithm~\ref{Alg:Imitater_BFT}, 2) triggering it depends on the bandwidth adaptability requirements and involves many details. We leave the discussion on triggering the Dispersal phase in Section~\ref{sec:trigger-dispersal}. 

\begin{algorithm}[!th]
\caption{Imitater-FHS Protocol (for node $i$)} 
\label{Alg:Imitater_BFT}
\begin{algorithmic}[1] 
\item As a leader
\item \textbf{Upon} entering view $v$
\item \hspace{1em}\textbf{If} a set $V_{set}$ consisting of $n-f$ Vote messages together with partial signatures is received
    \item \hspace{2em}$qc\gets$ CreateQC($v$,$V_{set}$)
    \item \hspace{2em}$B\gets$CreateBlock($v$,$qc$,$\perp$,$qc.block$)
\item \hspace{1em}\textbf{If} a set $N_{set}$ consisting of $n-f$ New-View messages is received
    \item \hspace{2em}$aggQc \gets$ CreateAggQC($v$, $N_{set}$)
    \item \hspace{2em}$hqc \gets$ highest QC in $aggQc$
    \item \hspace{2em}$B \gets$ CreatBlock($v$,$\perp$,$aggQc$,$hqc.block$)
\item \hspace{1em}Broadcast $\langle \text{Proposal},v,B\rangle_i$
\vspace{0.5em}
\item As a non-leader node
\item \textbf{Upon} receiving proposal message $P$ from leader
    \item \hspace{1em}\textbf{if} SafeProposal($P$,$curView$)
        \item \hspace{2em}$hC^i\gets$ highest AC in microblock chain $i$
        \item \hspace{2em}$sig\gets$ Sign on $P.B$
        \item \hspace{2em}Send $\langle \text{Vote},sig\rangle$ and $hC^i$ to leader $(v+1)$
    \item \hspace{2em}$B'\gets B.parent$
    \item \hspace{2em}$B''\gets B'.parent$
    \item \hspace{2em}\textbf{if} $B'.view = B''.view+1$
        \item \hspace{3em}\textbf{for} $C^j$ \textbf{in} $B''.mbs$
            \item \hspace{4em}$I(b)\gets C^j.id$
            \item \hspace{4em}Trigger $\langle\text{Mb-Retrieval},I(b)\rangle$
        \item \hspace{3em}\textbf{Until} all microblocks in $B''$ are available
            \item \hspace{4em}Sort all the requests
            \item \hspace{4em}Execute all requests
            \item \hspace{4em}Respond to clients
\vspace{0.5em}
\item \textbf{Timeout}
\item \hspace{1em}$hqc \gets$ highest QC in local
\item \hspace{1em}$hC^i\gets$ highest AC in microblock chain $i$
\item \hspace{1em}Send $\langle\text{New-View}, hqc, v+1 \rangle_i$ and $hC^i$ to the leader $(v+1)$
\item \hspace{1em}Enter view $v+1$
\end{algorithmic}
\end{algorithm}

In each round of consensus, the leader constructs a QC \(qc\) using collected Vote messages or New-View messages to prove that its proposal is secure.
Then the leader constructs a block based on the aforementioned \(qc\) (derived from Vote messages) or \(aggQc\) (derived from New-View messages) by call of CreatBlock procedure in Algorithm~\ref{alg:utilities}. The leader then broadcast the block.



As a non-leader node, upon receiving a proposal, node $i$ execute SafeProposal to verify its validity. If the verification is passed, the node signs on the block and sends a Vote message to the leader of view $v+1$, also attaching the latest AC for the \(i\)th chain. Then node $i$ enters view $v+1$.


If the parent block \(B' = B.parent\) and its grandparent block \(B'' = B'.parent\) have consecutive view numbers, then the grandparent block \(B''\) can be safely committed. Node \(i\) will initiate the Mb-Retrieval event of all microblocks within \(B''\) and wait for these microblocks to become available. As is discussed in Section~\ref{sec:retrieval detail}, the retrieval of a microblock will recursively trigger the retrieval of its predecessor microblocks which have never been triggered before. This ensures that the chain of blocks is fully available for processing and adherence to the protocol’s consistency requirements.


Once all the microblocks in block \(B\) have been decoded and are available, the confirmation process for the block can begin. Specifically, all requests contained within these microblocks are extracted and sorted by their microblock position and timestamp to form a transaction list. Following this, the node executes all transactions in the transaction list in sequence and returns response values to the clients.

Each node would set a timer when entering a new view. When the local timer expires, node $i$ sends its highest local QC along with the latest AC for the \( i \)th chain in a New-View message to the leader of view \( v+1 \) and enters view $v+1$.

\subsection{Analysis}
\subsubsection{Safety and Liveness}
\label{sec:bftanalysis}
Imitater-FHS ensures security, including safety and liveness, justified in Theorem~\ref{theorem:safety} and Theorem~\ref{theorem:liveness}, respectively.


\begin{theorem}{Safety:}
    \label{theorem:safety}
 No equivocated requests will be committed by honest nodes at the same position.
\end{theorem}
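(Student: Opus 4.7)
The plan is to decompose safety into two layers: block-level safety (honest nodes commit the same block at every height) and content-level safety (the microblocks referenced by a committed block, and hence the requests extracted from them, are identical at every honest node).

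For the first layer, I would replay the standard Fast-HotStuff argument adapted to the two-chain commit rule used in Algorithm~\ref{Alg:Imitater_BFT}. An honest node commits a grandparent block $B''$ only when $B'=B.parent$ and $B''=B'.parent$ carry consecutive view numbers, which means $B'$ is certified by a QC formed from $2f+1$ Vote messages in view $B''.view$. Suppose, toward contradiction, two honest nodes commit distinct $B''$ and $\hat{B}''$ at the same height. In the normal-case branch, two QCs in the same view would require two disjoint sets of $2f+1$ votes; their intersection contains at least $f+1$ nodes, hence an honest node, which would have signed two different blocks in a single view, contradicting the check in \Call{SafeProposal}{}. In the view-change branch, the \Call{CreateAggQC}{} procedure collects $2f+1$ New-View messages, and \Call{SafeProposal}{} requires the proposed block to extend the highest QC in the AggQC; a standard induction on views, using the same quorum-intersection argument, shows no conflicting QC can have been formed earlier, so any subsequent certified branch must extend $B''$.

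For the second layer, I would argue that equal committed blocks imply equal sequences of executed requests. If two honest nodes both commit $B''$, they share the identical set $B''.mbs$ of availability certificates. By Lemma~\ref{lemma:microblock uniqueness}, each position $p$ on each chain $i$ admits at most one microblock with a valid AC, so the certificates in $B''.mbs$ pin down a unique microblock per chain. By Theorem~\ref{lemma:Microblock consistency} (SMP Totality), every honest node that triggers \(\langle\text{Mb-Retrieval}\rangle\) eventually reconstructs that same microblock content (or a consistent empty marker). The recursive trigger in Algorithm~\ref{Alg:Microblock Ret} pulls all predecessors, and Theorem~\ref{Theorem:SMP chain-consistency} guarantees that the entire prefix of each chain implicitly referenced by $B''$ is identical across honest nodes. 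Finally, the execution rule sorts requests deterministically by microblock position and timestamp, so the commit sequences at the two nodes coincide position-by-position, ruling out equivocated requests being committed at the same position.

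The delicate step, and the one I would write out most carefully, is the view-change branch of the block-level argument: showing that the highest QC inside an AggQC extends any previously committed $B''$. This is where the Fast-HotStuff style invariant is needed---that once $2f+1$ honest Vote messages certify $B'$ extending $B''$, any subsequent leader's $2f+1$ New-View messages must include at least one honest node whose locked QC is at least as high as the one certifying $B'$, so the selected $hqc$ in the AggQC extends $B''$ and the check ``$B$ extends from $hqc$'' in \Call{SafeProposal}{} forbids any conflicting proposal. The remaining steps reduce to invoking the already established SMP lemmas and determinism of the sorting rule.
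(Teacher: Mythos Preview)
Your proposal is correct and mirrors the paper's own argument: the paper likewise splits safety into (i) block-level agreement, proved via the Fast-HotStuff quorum-intersection and highest-QC invariants (its Lemmas~\ref{lemma:non-comflicating qc}--\ref{lemma:unique-commit}), and (ii) content-level agreement, proved by invoking SMP Totality (Theorem~\ref{lemma:Microblock consistency}) and Chain-Consistency (Theorem~\ref{Theorem:SMP chain-consistency}) together with the deterministic sorting rule (Lemma~\ref{lemma:Block consistency}). Your explicit appeal to Lemma~\ref{lemma:microblock uniqueness} and your more detailed unpacking of the view-change branch are harmless elaborations, not a different route.
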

Here, we provide a brief proof and the complete proof is provided in Appendix~\ref{sec:full-proof}.

Requests are organized into microblocks and distributed among nodes. During consensus, the identifiers of microblocks are included in the proposal, and the confirmation of requests is achieved by confirming the proposal block. Thus, we need to prove: (i) All honest nodes commit a consistent proposal block in each round (Lemma~\ref{lemma:unique-commit}), and (ii) they can obtain a consistent transaction list from a consistent proposal (Lemma~\ref{lemma:Block consistency}).

For point (i), the proof aligns with Fast-HotStuff, which ensures that all honest nodes commit a consistent block. This consistency in microblock identifiers is inherited from the guarantees provided by Fast-HotStuff.
For point (ii), a microblock may be explicitly or implicitly included in a block. Since the block includes the AC for each microblock, all honest nodes can access consistent content for explicitly included microblocks (Lemma~\ref{lemma:Microblock consistency}). Furthermore, all microblocks that are implicitly included in the proposal can be retrieved as well (Theorem~\ref{Theorem:SMP chain-consistency}). A consistent sorting algorithm ensures that all honest nodes confirm requests in the same order, leading to consistent execution.

\begin{theorem}{Liveness:}
    \label{theorem:liveness}
    After GST, the confirmation for a pending request will always be reached.
\end{theorem}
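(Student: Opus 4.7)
The plan is to trace a pending request through each phase of Imitater-FHS and show that progress is made after GST. I would decompose the argument into four stages: (1) the request reaches an honest node; (2) the corresponding microblock is dispersed and obtains an AC; (3) an honest leader eventually packages that AC into a committed block; (4) the committed block is retrieved and executed.

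For stage (1), I would invoke the client-side resubmission mechanism from the system model: since there are at most $f$ Byzantine replicas and clients retry via a timeout, after finitely many attempts the request is delivered to an honest node $i$. That node packages the request into some microblock $b_p^i$ and triggers an Mb-Dispersal event. By Lemma~\ref{lemma:Dispersal_termination}, after GST node $i$ collects an AC $C_p^i$ within bounded time, since at least $2f+1$ honest responders return valid Mb-Ack messages within $\Delta$.

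For stage (3), which I expect to be the main obstacle, I would reason about the leader-rotation schedule inherited from Fast-HotStuff. After GST, within at most $f+1$ consecutive views there exists some view $v^*$ whose leader is honest; moreover, by choosing timeouts sufficiently larger than $2\Delta$, the Pacemaker synchronizes all honest nodes into $v^*$ and subsequent views. The honest node $i$ attaches $hC^i \succeq C_p^i$ to every Vote/New-View message it sends to the leader of the next view (Algorithm~\ref{Alg:Imitater_BFT}, lines on Vote and Timeout). Hence by view $v^*$ the honest leader receives $hC^i$ from $i$ and, by CreateBlock, must include it in its proposal $B^*$. Here I would appeal to Fast-HotStuff's safety-and-progress argument: an honest leader's proposal passes SafeProposal at every honest node, so all honest nodes vote, a QC is formed, and after two further consecutive honest views $v^*+1, v^*+2$ (again guaranteed within $O(f)$ views after GST by rotation and synchronized timeouts) the commit rule $B'.view = B''.view + 1$ fires on $B^*$. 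Even if the AC is not explicitly re-included in these subsequent blocks, the chain-consistency of the SMP (Theorem~\ref{Theorem:SMP chain-consistency}) guarantees that once any successor AC from chain $i$ is committed, $b_p^i$ is implicitly included; so it suffices that \emph{some} AC from chain $i$ at position $\geq p$ is committed.

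Finally, for stage (4), once $B^*$ (or a successor) is committed, every honest node triggers $\langle\text{Mb-Retrieval}, R(b_p^i)\rangle$, which by the recursive predecessor trigger and Theorem~\ref{lemma:SMP_avaliablity} (SMP Totality) yields the complete microblock content, after which the deterministic sorting and execution step confirms the request and responds to the client. Chaining the bounded-time arguments of the four stages yields the liveness bound. The delicate point, which I would treat carefully in the full write-up, is bounding the number of views spent waiting for two consecutive honest leaders under the rotation schedule and ensuring the synchronized Pacemaker timeouts are large enough that each such honest view completes its vote-collection phase before timing out.
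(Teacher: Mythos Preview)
Your plan is correct and considerably more detailed than the paper's own proof, which compresses the whole argument into two sentences. Both arguments rest on the same two pillars: leader rotation eventually yields enough consecutive honest leaders for the Fast-HotStuff commit rule to fire, and the chained SMP guarantees that any honest node's microblock is eventually swept into a committed proposal. The differences lie in how each pillar is justified. For the first, the paper gives an explicit pigeonhole count: with $2f+1$ honest and $f$ Byzantine nodes in round-robin order, the $f$ Byzantine leaders can spoil at most $f$ honest-to-honest adjacencies, leaving at least $f+1$ pairs of consecutive honest leaders per rotation; you instead appeal to a generic $O(f)$-views bound plus Pacemaker synchronization. For the second, the paper relies on the fact that under round-robin node $i$ itself eventually becomes leader and packages its own highest AC, whereas you argue that $i$'s $hC^i$ rides on its Vote/New-View message to some other honest leader. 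Your route has a small wrinkle: the leader proceeds after only $2f+1$ replies and need not have received $i$'s message, so the paper's ``$i$ becomes leader'' observation is the cleaner guarantee of eventual inclusion---you may want to fold it into your stage~(3). Your extra stages (client resubmission, Lemma~\ref{lemma:Dispersal_termination} for the AC, Theorem~\ref{lemma:SMP_avaliablity} for retrieval) are sound additions that the paper leaves implicit.
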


\begin{proof}
Under a leader-rotation rule, for Byzantine nodes to prevent protocol liveness, they would need to interrupt the formation of a one-chain by ensuring a Byzantine leader follows each honest leader. In a system with \(n = 3f + 1\) nodes, where there are \(f\) Byzantine nodes, in the worst-case scenario, each could be strategically placed after an honest node. However, there would still be an additional \(f + 1\) honest nodes capable of forming a one-chain. Therefore, after GST, there would at least be two consecutive leaders forming a direct-chain, and a decision will eventually reach.

Given that a chain-based shared mempool (SMP) is employed and a Round-robin election method is used, microblocks from all honest nodes will eventually be included in some proposal. Consequently, every request will ultimately be executed. 
\end{proof}

\section{Proof of Theorem~\ref{theorem:safety}}
\label{sec:full-proof}
We define a sequence of blocks with consecutive views as a \emph{direct chain}. Specifically, if there are two blocks \(B\) and \(B'\) such that \(B'.view = B.view + 1\) and \(B'.parent = B\), then \(B\) and \(B'\) are said to form an \emph{one-direct chain}. Similarly, if there exists a block \(B''\) such that \(B''.view = B'.view + 1\), \(B'.view = B.view + 1\), \(B''.parent = B'\), and \(B'.parent = B\), then \(B''\), \(B'\), and \(B\) together constitute a \emph{two-direct chain}.

We define two blocks \(B\) and \(B'\) as \textbf{conflicting} if \(B\) is neither a predecessor nor a successor of \(B'\). This notion of conflict naturally extends to Quorum Certificates; specifically, \(qc1\) and \(qc2\) are considered conflicting if \(qc1.block\) and \(qc2.block\) are \textbf{conflicting}.

\begin{lemma}
    \label{lemma:non-comflicating qc}
    If any two valid Quorum Certificates, \(qc1\) and \(qc2\), are conflicting, then it must be the case that \(qc1.view \neq qc2.view\).
\end{lemma}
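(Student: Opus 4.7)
The plan is to proceed by contrapositive, i.e., to show that any two valid QCs with the same view cannot be conflicting. So suppose for contradiction that $qc1$ and $qc2$ are valid, conflicting, and satisfy $qc1.view = qc2.view = v$. By the conflict assumption, $qc1.block \neq qc2.block$ (since a block is trivially its own predecessor/successor), so the two QCs certify two distinct blocks at the same view.

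The heart of the argument is a standard quorum-intersection step. By the definition of a valid QC (see \textsc{CreateQC} in Algorithm~\ref{alg:utilities}), $qc1$ is formed by aggregating partial signatures from a set $S_1$ of $2f+1$ distinct voters, and similarly $qc2$ from a set $S_2$ of $2f+1$ distinct voters. Since $n = 3f+1$, we have
\[
|S_1 \cap S_2| \;\geq\; |S_1| + |S_2| - n \;=\; (2f+1) + (2f+1) - (3f+1) \;=\; f+1.
\]
Because at most $f$ nodes are Byzantine, at least one node in $S_1 \cap S_2$ is honest; call it $h$. Thus $h$ has issued a Vote message carrying a partial signature on both $qc1.block$ and $qc2.block$ in view $v$.

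It then remains to derive a contradiction from the single-vote-per-view rule of the protocol. Looking at Algorithm~\ref{Alg:Imitater_BFT}, an honest non-leader only signs and sends a \textsc{Vote} upon receiving a proposal whose \textsc{SafeProposal} check passes, and after voting it advances to view $v+1$; hence an honest node emits at most one Vote in any given view. Since $qc1.block \neq qc2.block$ but both carry $h$'s partial signature in view $v$, this directly contradicts the single-vote rule, and the lemma follows.

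The main obstacle is not the counting step (which is routine quorum intersection) but pinning down the single-vote-per-view invariant precisely from the pseudocode of Algorithm~\ref{Alg:Imitater_BFT}; in particular, I would want to argue that the \textsc{SafeProposal} predicate, together with the fact that a node advances to view $v+1$ immediately after voting (and only sends a \textsc{New-View} for view $v+1$ on timeout), prevents any honest node from ever signing two distinct blocks with the same view number. Once this invariant is isolated as a small sub-lemma, the rest of the proof is immediate.
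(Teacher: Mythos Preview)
Your proposal is correct and follows essentially the same approach as the paper's own proof: a contradiction argument based on quorum intersection showing that two valid QCs in the same view would force some honest node to vote twice, contradicting the one-vote-per-view rule. Your version is slightly more explicit about the counting (the paper simply asserts that $2f+1$ votes each implies a common honest voter) and about locating the single-vote invariant in Algorithm~\ref{Alg:Imitater_BFT}, but the structure is the same.
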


\begin{proof}
    We will prove this by contradiction. Assume that in the same view \(v\), two conflicting blocks \(B\) and \(B^*\) both receive sufficient votes and respectively form \(qc\) and \(qc^*\). Since a QC requires \(2f+1\) vote messages, it implies that at least one honest node voted twice in view \(v\), which is a contradiction because honest nodes vote at most once in each view. Therefore, for any two conflicting QCs, \(qc\) and \(qc^*\), it must be true that \(qc.view \neq qc^*.view\).
\end{proof}

\begin{lemma}
    \label{lemma:non-conflicating block}
    If \(B\) and \(B^*\) are conflicting, then only one of them can be committed by an honest replica    
\end{lemma}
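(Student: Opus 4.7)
The plan is to proceed by contradiction along the standard Fast-HotStuff safety argument. Suppose two conflicting blocks $B$ and $B^*$ are both committed by honest replicas. Each committed block carries a QC, and by Lemma~\ref{lemma:non-comflicating qc} two conflicting QCs cannot share a view, so WLOG $B.view < B^*.view$. Imitater-FHS commits $B$ only when some proposal processed locally satisfies the check $B'.view = B''.view+1$ of Algorithm~\ref{Alg:Imitater_BFT}; in particular, there must exist a direct child $B_1$ of $B$ with $B_1.view = B.view + 1$ carrying a valid QC on $B$, formed from $2f+1$ votes, of which at least $f+1$ come from honest nodes.

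The heart of the proof is the inductive claim: for every view $v^\star > B.view$, any valid QC on a block $B^\star$ with $B^\star.view = v^\star$ satisfies that $B^\star$ extends $B$. The base case $v^\star = B.view + 1$ is immediate from Lemma~\ref{lemma:non-comflicating qc}, which forbids a second valid QC in the same view, so $B^\star$ must coincide with $B_1$ itself. For the inductive step I would inspect how $B^\star$ was accepted by the $f+1$ honest nodes in its certifying quorum. By \emph{SafeProposal}, either (a) $B^\star.view = B^\star.qc.view + 1$, in which case $B^\star.qc.view$ lies strictly between $B.view$ and $v^\star$ and the induction hypothesis applies to $B^\star.qc.block$; or (b) $B^\star$ carries an $aggQc$ and extends the highest enclosed QC $hqc$.

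Case (b) is the main obstacle. The $aggQc$ bundles $2f+1$ New-View messages, each of which, when sent by an honest node, reports that sender's highest locally known QC. By standard quorum intersection, the set of $2f+1$ New-View senders meets the $2f+1$ voters that formed the QC on $B_1$ in at least $f+1$ positions, and therefore contains at least one honest node whose highest QC has view at least $B.view + 1$. Hence $hqc.view \geq B.view + 1$, and by the inductive hypothesis $hqc.block$ extends $B$; since $B^\star$ is required by \emph{SafeProposal} to extend $hqc.block$, so does $B^\star$. This closes the induction.

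Applying the claim to $B^*$ itself forces $B^*$ to extend $B$, contradicting the hypothesis that $B$ and $B^*$ conflict. The subtle point I expect to need the most care with is the invariant underlying case (b): an honest node, having once contributed to the QC on $B_1$ in view $B.view+1$, must thereafter carry a QC of view $\geq B.view+1$ in every New-View message it sends. This rests on the fact that honest nodes always attach their highest known QC when triggering view change and never decrease this value; once that invariant is spelled out, the quorum-intersection step makes the inductive argument go through cleanly.
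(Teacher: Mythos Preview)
Your plan is sound and in fact proves the stronger cross-replica statement that the paper postpones to Lemma~\ref{lemma:unique-commit}. The paper's own proof of the present lemma is much more local: it fixes the single honest replica $r$ that committed $B$, notes that therefore at least $f+1$ honest nodes have seen the QC $qc$ for $B$, and then argues directly that when $r$ receives a proposal for the conflicting $B^*$ its \emph{SafeProposal} check must fail, because the $aggQc$ attached to $B^*$ would have to exhibit a highQC pointing only to a strict ancestor of $B$ even though some honest contributor to that $aggQc$ already holds $qc$. In short, the paper disposes of this lemma by a one-shot SafeProposal argument and defers the quorum-intersection induction over views to Lemmas~\ref{lemma:unique-extend} and~\ref{lemma:unique-commit}; you instead fold the whole thing into a single induction, which is cleaner and more self-contained but subsumes material the paper proves separately.

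One imprecision to fix in your case~(b): an honest node that voted for $B_1$ in view $B.view+1$ is only guaranteed to have seen the QC on $B$ (view $B.view$), not the QC on $B_1$; nothing ensures it learned the latter before issuing its New-View. So the intersection yields $hqc.view \ge B.view$, not $\ge B.view+1$. This is harmless---if $hqc.view = B.view$ then Lemma~\ref{lemma:non-comflicating qc} forces $hqc.block = B$, and otherwise the induction hypothesis applies---but you should state the correct bound and cover the equality case. You should also say explicitly why a QC on $B_1$ exists at all (it does: the committing replica saw some $B_2$ with $B_2.parent = B_1$, and $B_2$ carries that QC), since your intersection step uses its $2f+1$ voters.
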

\begin{proof}
From Lemma~\ref{lemma:non-comflicating qc}, we know that \(B.view \neq B^*.view\). Assuming block \(B\) has been committed by an honest node \(r\), at least \(f+1\) honest nodes are aware of its corresponding QC, \(qc\). When node \(r\) receives a proposal for block \(B^*\), which is conflicting with \(B\), the highQC in the proposal points to \(B\)'s ancestor block \(B^\circ\), such that \(B^*.qc.block = B^\circ\). If we consider \(B^\circ\) as the parent block of \(B\) (i.e., \(B.parent = B^\circ\)), then the AggQC of \(B^*\) will be deemed invalid. This is because any QC with a view number greater than or equal to \(qc\) should be included in the AggQC, yet the highQC included is for \(B\)'s parent, \(qc^\circ\), not \(qc\). Thus, this proposal will fail the SafeProposal check.
\end{proof}

Next, we demonstrate that if a node commits a block \(B\) at view \(v\), then all other honest nodes will also commit the same block at height \(v\).
\begin{lemma}
    \label{lemma:unique-extend}If an honest node $r$ commits a block \(B\), then $B$'s QC, \(qc\) will be used as the highQC for the next block \(B'\).
\end{lemma}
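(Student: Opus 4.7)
The plan is to exploit the Fast-HotStuff commit rule together with a standard quorum-intersection argument. First I would unpack what ``honest node $r$ commits $B$'' means under Algorithm~\ref{Alg:Imitater_BFT}: there must exist a two-direct chain $B \to B^* \to B^\#$ whose views are $B.view,\; B.view+1,\; B.view+2$. Hence $B^*$ embeds a valid QC $qc$ for $B$, and $B^*$ itself received at least $2f+1$ votes (since $B^\#$'s QC for $B^*$ must exist). Of these $2f+1$ voters on $B^*$, at least $f+1$ are honest, and each such honest node has $qc$ in its QC store from the moment it received $B^*$.

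Next I would argue that no honest node's highQC can drop below $qc$'s view thereafter. Because a node's highQC is updated monotonically and SafeProposal only accepts a block whose attached QC has a strictly higher view than what it already knows, these $f+1$ honest voters on $B^*$ permanently retain $qc$ or a strict descendant of $qc$ as their highQC.

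The third step considers how the next block $B'$ is formed, which splits into two cases. In the normal case, the leader of view $v \ge B.view+2$ collects $2f+1$ Vote messages and embeds the freshly aggregated QC; by construction this QC certifies a block that already extends $qc$. In the AggQC case, the leader collects $2f+1$ NewView messages and picks the highest among their highQCs. The quorum-intersection bound $(2f+1)+(f+1)-(3f+1) = f+1$ shows that at least one of these NewView messages comes from an honest node in the set identified above, so some entry of the AggQC is $qc$ or a descendant of $qc$; moreover, by Lemma~\ref{lemma:non-comflicating qc} no QC at a higher view can conflict with $qc$, so the maximum-view QC in the AggQC also extends $qc$. In either case, the highQC used when building $B'$ is $qc$ itself or a QC whose ancestry passes through $qc$, which is the conclusion of the lemma.

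The main obstacle I anticipate is the AggQC case: I must rule out that the highest QC among the $2f+1$ collected NewView messages is some QC at a view strictly between $B.view$ and $B'.view$ that \emph{conflicts} with $qc$. Closing this gap requires combining Lemma~\ref{lemma:non-comflicating qc} with an implicit induction on views, using that any QC at an intermediate view must itself have been built atop an AggQC or a direct QC whose lineage, by the same argument, already descends from $qc$. Once this inductive hypothesis is in place the remainder of the proof reduces to routine quorum arithmetic.
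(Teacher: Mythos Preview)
Your proposal is correct and follows essentially the same quorum-intersection argument as the paper: at least $f{+}1$ honest nodes that voted on $B$'s successor hold $qc$, so any $2f{+}1$-set of NewView messages must contain $qc$ or a descendant of it. Your treatment is in fact more thorough---the paper's proof only addresses the NewView/AggQC case explicitly and glosses over precisely the obstacle you flag (that the \emph{highest} QC in the AggQC, not merely some QC in it, must descend from $qc$), effectively deferring that step to the case analysis inside Lemma~\ref{lemma:unique-commit}.
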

\begin{proof}
When the primary node starts a new view \(v'\), where \(v' > v\), with \(2f+1\) new-view messages and proposes block \(B'\), we need to demonstrate that among any set of \(2f+1\) new-view messages, at least one will contain the QC of block \(B\), \(qc\), or a successor QC of $qc$.

In view \(v\), node \(r\) committed block \(B\), meaning that at least \(2f+1\) nodes voted for \(B\)'s QC, implying that at least \(f+1\) honest nodes are aware of \(qc\)'s existence. Therefore, in any view greater than \(qc'.view\), when the primary node collects \(2f+1\) new-view messages, at least one of these messages, originating from one of these \(f+1\) honest nodes, will include \(qc\) or a QC inherited from \(qc\). Hence, block \(B'\) will have \(B\) as an ancestor.
\end{proof}

Next, we will prove that if an honest node commits block \(B\), then all other honest nodes will not commit any conflicting block \(B^*\).

\begin{figure}
    \centering
    \includegraphics[width=0.45\textwidth]{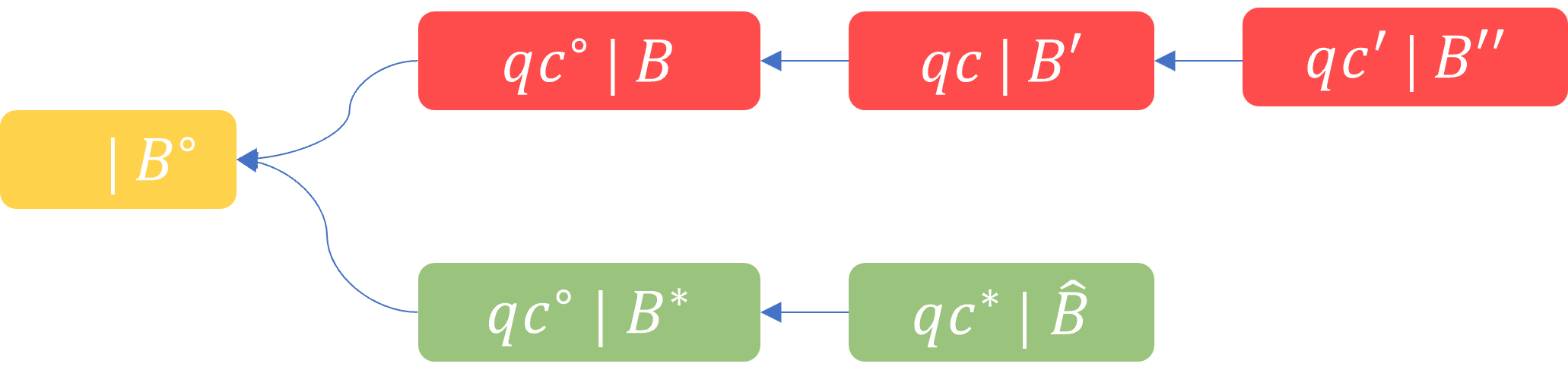}
    \caption{$B$ and $B^*$ both getting committed (impossible).}
    \label{fig:chain-fork}
\end{figure}

\begin{lemma}
    \label{lemma:unique-commit}
    It is impossible for any two conflicting blocks, \(B\) and \(B^*\), to be committed each by an honest node.
\end{lemma}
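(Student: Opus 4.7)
The plan is a contradiction argument built from the three preceding lemmas. Suppose both $B$ and $B^*$ are committed by honest nodes while being conflicting. By the commit rule in Algorithm~\ref{Alg:Imitater_BFT}, each committed block requires a two-direct-chain of descendants, so in particular each of $B$ and $B^*$ carries a valid Quorum Certificate (stored inside its child block). Without loss of generality I take $B.view \le B^*.view$ and split on whether the inequality is strict.

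The equal-view case is immediate from Lemma~\ref{lemma:non-comflicating qc}: two valid QCs whose certified blocks sit at the same view cannot be conflicting, so $B = B^*$, contradicting the hypothesis. For the strict case $B.view < B^*.view$, I invoke Lemma~\ref{lemma:unique-extend}. Because $B$ is committed, at least $f+1$ honest nodes voted on $B$'s direct successor and therefore hold a highest QC that descends from $B$. When the leader of view $B^*.view$ proposes $B^*$, it must gather $2f+1$ new-view messages, and the quorum-intersection bound $(2f+1)+(f+1)-(3f+1)=f+1$ forces at least one of these messages to come from a node whose highest QC descends from $B$. The SafeProposal rule for AggQC then requires $B^*$ to extend from the highest QC in its AggQC, so $B^*$ descends from $B$, which directly contradicts the assumption that $B$ and $B^*$ are conflicting.

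The principal obstacle is that Lemma~\ref{lemma:unique-extend} is literally phrased for \emph{the next block} after $B$, whereas the argument needs its conclusion at an arbitrary view $v' > B.view$, possibly several views above $B$. The existing proof of that lemma is view-uniform: its quorum-intersection step goes through unchanged for any $v' > B.view$, so I would either cite the stronger reading directly or, to preempt objection, prove a strengthened form by short induction on $v'$. The inductive step uses that every block proposed at an intermediate view already descends from $B$, hence every QC that honest nodes can ever add to their highest-QC set also descends from $B$; this keeps the highest-QC ancestry invariant all the way up to view $B^*.view$, and the single-line SafeProposal argument above then closes the contradiction.
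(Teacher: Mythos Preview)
Your argument is correct and rests on the same mechanism as the paper's---contradiction via quorum intersection together with the SafeProposal highest-QC rule---but the case decomposition differs. The paper splits on where $qc^*.view$ falls relative to \emph{both} $qc.view$ and $qc'.view$, where $qc'$ is the QC of $B$'s immediate child $B'$ in the committing two-direct chain: (i) $qc^*.view<qc.view$ is dismissed because $qc$ dominates as a highQC; (ii) $qc.view<qc^*.view<qc'.view$ is vacuous since $qc'.view=qc.view+1$; (iii) $qc'.view<qc^*.view$ is handled exactly by your quorum-intersection step through Lemma~\ref{lemma:unique-extend}. Anchoring the split at $qc'$ rather than at $qc$ is what lets the paper avoid the explicit induction you propose: once $qc'$ exists, the $f{+}1$ honest voters for $B'$ are already locked on $qc$, so any later AggQC must carry a QC descending from $B$.

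One caveat on your second paragraph: the sentence ``it must gather $2f{+}1$ new-view messages'' covers only the timeout path. In the happy path the leader of view $B^*.view$ proposes from a direct QC and SafeProposal checks $B^*.view=B^*.qc.view+1$ rather than AggQC extension. Your inductive strengthening in the third paragraph silently repairs this (the block at view $B^*.view-1$ already descends from $B$, hence so does $B^*$), but you should say so explicitly; the paper's proof is equally terse on this point, so neither presentation is fully watertight without that remark.
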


\begin{proof}
    We will prove this by contradiction. Suppose there is a fork in the blockchain, as shown in Figure~\ref{fig:chain-fork}. Without loss of generality, let us assume that an honest node \(i\) commits block \(B\) on the upper branch chain, and that another honest node \(i'\) commits a conflicting block \(B^*\).

    Let's analyze the different possible scenarios under this assumption involving \(B^*\)'s quorum certificate \(qc^*\). As \(qc^*\) is in conflict with both \(qc\) and \(qc'\) as indicated by Lemma~\ref{lemma:non-comflicating qc}, thus \(qc^*.view \neq qc.view\) and \(qc^*.view \neq qc'.view\).

    Case 1: \(qc^*.view < qc.view\). In this case, \(qc^*\) will not be chosen by any honest node as a highQC, because there exists a QC, \(qc\), with a view number higher than \(qc^*.view\).

    Case 2: \(qc.view < qc^*.view < qc'.view\). Since \(B\) was committed through a two-direct chain, it follows that \(qc.view + 1 = qc'.view\). Therefore, there cannot exist a \(qc^*\) whose view number lies between \(qc\) and \(qc'\).

    Case 3: \(qc'.view < qc^*.view\). The formation of \(qc'\) arises from \(2f+1\) nodes voting for \(B'\) and \(qc\), indicating that at least \(f+1\) honest nodes have seen \(qc\). In any view higher than \(qc'.view\), the primary node collects \(2f+1\) new-view messages, at least one of which will include \(qc\) or a QC inherited from \(qc\) (Lemma~\ref{lemma:unique-extend}). Since \(qc^*\) does not inherit from \(qc\), the quorum certificate \(qc^*\) for the conflicting block \(B^*\) cannot form.

    In conclusion, \(B^*\) cannot be committed by any honest node, rendering the assumption invalid. Therefore, it is impossible for two conflicting blocks \(B\) and \(B^*\) to be committed by different honest nodes respectively.
    
\end{proof}

Lemma~\ref{lemma:non-conflicating block},~\ref{lemma:unique-extend} and~\ref{lemma:unique-commit} ensure that all honest nodes commit a consistent block. However, since our protocol organizes transactions in microblocks and the block contains only microblock identifiers, it is also necessary to confirm that the set of requests referenced by each block is identical. This requires verification that all nodes decode the same transactions from these microblock identifiers to achieve a fully consistent state across the network.

\begin{lemma}
    \label{lemma:Block consistency}
    If two honest nodes have the same block $B$, then it is guaranteed that they can construct the same transaction list.
\end{lemma}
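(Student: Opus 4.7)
The plan is to decompose the construction of the transaction list into three deterministic components, each identical across the two honest nodes: (i) the set of microblocks referenced by $B$ (both explicitly and implicitly), (ii) the content of each such microblock, and (iii) the sorting rule that extracts transactions from that set.

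First, I would argue that two honest nodes holding the same block $B$ identify the same multiset of microblocks for execution. By the chain-based design of Section~\ref{sec:imitater}, $B$ contains, for each chain $i\in[n]$, one AC $C_{p_i}^i$; this AC explicitly references the microblock $b_{p_i}^i$ and implicitly references all uncommitted predecessors $b_q^i$ with $q\leq p_i$ on chain $i$. The ``uncommitted'' set is determined by the consensus history preceding $B$. By safety of the consensus layer (Lemma~\ref{lemma:unique-commit}), both honest nodes have committed exactly the same sequence of ancestor blocks of $B$, and an induction on consensus rounds shows they therefore agree on which (chain, position) pairs remain uncommitted. Consequently, both nodes deterministically derive the same set $\mathcal{M}(B)$ of microblock identifiers to commit.

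Second, I would show the content at each identifier in $\mathcal{M}(B)$ is consistent across the two nodes. For each explicitly referenced identifier, validity of its AC (checked in \emph{SafeProposal}) together with Lemma~\ref{lemma:Microblock consistency} (SMP Totality) guarantees that every honest node eventually decodes the same content, possibly the empty marker \texttt{Nil}. For implicitly referenced predecessors on chain $i$, Theorem~\ref{Theorem:SMP chain-consistency} extends this consistency to the entire prefix of the chain up to position $p_i$, since the predecessor AC is embedded in each microblock. Thus, for every identifier in $\mathcal{M}(B)$, both nodes hold identical microblock payloads.

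Finally, I would invoke the deterministic sorting rule from Algorithm~\ref{Alg:Imitater_BFT} (ordering first by microblock position, then by timestamp) and observe that applying the same rule to the same set of microblocks with identical contents yields identical transaction lists. The main obstacle I anticipate is the implicit-inclusion argument: rigorously justifying that both nodes agree on which predecessors are still uncommitted requires an induction anchored in the safety of earlier commits, and care is needed to confirm that no Byzantine leader can cause a divergence in what counts as ``already committed'' between two honest nodes, which itself follows from Lemma~\ref{lemma:unique-commit} applied to every ancestor of $B$.
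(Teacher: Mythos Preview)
Your argument is correct and rests on the same three ingredients the paper uses: Lemma~\ref{lemma:Microblock consistency} (SMP Totality) for the explicitly referenced microblocks, Theorem~\ref{Theorem:SMP chain-consistency} for the predecessors along each chain, and the deterministic sorting rule to fix the order. Where you diverge is your step~(i): you invoke Lemma~\ref{lemma:unique-commit} and an induction over earlier rounds to ensure both nodes agree on which predecessors are still ``uncommitted.'' The paper does not do this; its proof simply treats the microblocks associated with $B$ as the full chain prefixes up to the positions named in $B.mbs$, so the relevant identifier set is read off directly from $B$ with no reference to prior commits. Even under your delta interpretation, a lighter argument is available: $B$ carries the hash of its parent, so by collision resistance $B$ alone determines its entire ancestry and hence which positions were already covered---no appeal to Lemma~\ref{lemma:unique-commit} is needed. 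Your route is sound but imports a consensus-layer lemma into what the paper keeps as a purely SMP-layer statement; the paper's route is shorter but glosses over the boundary question you took care to address.
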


\begin{proof}
    Since all microblocks referenced in the block are accompanied by an AC, Lemma~\ref{lemma:Microblock consistency} ensure that each microblock associated with an AC can be consistently retrieved. Moreover, Theorem~\ref{Theorem:SMP chain-consistency} guarantees that the predecessors of these microblocks are also accessible and consistent. Therefore, all honest nodes can obtain a consistent set of microblocks. Utilizing a deterministic local sorting algorithm, each node can achieve a consistent ordering of these requests.
\end{proof}

Combing the above lemmas, the proof of Theorem~\ref{theorem:safety} proceeds as follows.
\begin{proof}
    Lemma~\ref{lemma:unique-commit} ensures that all honest nodes will commit the same block at the same blockchain height, while Lemma~\ref{lemma:Block consistency} guarantees that all honest nodes can derive a consistent transaction list based on a consistent block. Therefore, all honest nodes will execute requests in a consistent order.
\end{proof}


%


\end{document}